\newcommand*{\nom}[3]{#2\nomenclature[#1]{#2}{#3}} \makenomenclature
\newtheorem{example}{Example}
\newtheorem{theorem}{Theorem}
\newtheorem{conjecture}{Conjecture}
\newtheorem{lemma}{Lemma}
\newtheorem{definition}{Definition}
\newtheorem{mclaim}{Claim} 
\newtheorem{cor}{Corollary}
\newcommand{\R}{\mathbb{R}}
\newcommand{\E}{\operatorname{\mathbb{E}}}
\newcommand{\N}{\mathbb{N}}
\newcommand{\etal}{\textit{et al}.}
\DeclarePairedDelimiter{\ceil}{\lceil}{\rceil}
\DeclarePairedDelimiter\floor{\lfloor}{\rfloor}
\DeclareMathOperator{\Var}{Var}
\begin{document}

\title{Redesigning Bitcoin's Fee Market}

\renewcommand\Affilfont{\itshape\small}
\setcounter{Maxaffil}{1}

\author[1]{Ron Lavi}
\affil[1]{University of Bath, UK and Technion -- Israel Institute of Technology, Israel. ron.lavi.ac@gmail.com}
\author[2]{Or Sattath}
\affil[2]{Ben-Gurion University of the Negev, Israel. sattath@post.bgu.ac.il}
\author[3]{Aviv Zohar}
\affil[3]{The Hebrew University of Jerusalem, Israel. avivz@cs.huji.ac.il}

\date{}

\maketitle

\begin{abstract}
The Bitcoin payment system involves two agent types: Users that transact with the currency and pay fees and miners in charge of authorizing transactions and securing the system in return for these fees. Two of Bitcoin's challenges are (i) securing sufficient miner revenues as block rewards decrease, and (ii) alleviating the throughput limitation due to a small maximal block size cap. These issues are strongly related as increasing the maximal block size may decrease revenue due to Bitcoin's pay-your-bid approach. To decouple them, we analyze the ``monopolistic auction'' \cite{goldberg2006competitive}, showing: (i) its revenue does not decrease as the maximal block size increases, (ii) it is resilient to an untrusted auctioneer (the miner), and (iii) simplicity for transaction issuers (bidders), as the average gain from strategic bid shading (relative to bidding one's value) diminishes as the number of bids increases.
\end{abstract}





\section{Introduction}
Bitcoin's security relies on its ability to attract honest miners and to incentivize them to invest large amounts of computational power~\cite{nakamoto08bitcoin}. This is done by rewarding miners for the creation of new blocks. The payment comprises a block reward that is a fixed amount per block, and a fee that each transaction pays for its inclusion in the block. Since the block reward is cut in half approximately every 4 years, transaction fees gradually become the miners' main incentive to participate. The decision on which transactions to include in the block is made using a ``pay-your-bid'' auction: each transaction contains a fee for its inclusion in the block. A miner, in turn, includes the highest bidding transactions that fit in the block. In this auction, larger block sizes imply smaller transaction fees, since all transactions will aim to pay the lowest possible fee for their inclusion in the block. The Bitcoin block size therefore significantly influences miners' revenue. Indeed, one of the main arguments against a block-size increase in Bitcoin has been that it may cause transaction fees to drop significantly (see, e.g.,~\cite{houy2014economics}). 
Thus, two of the Bitcoin system's main problems, i.e., obtaining sufficient revenue for the miners as the block reward gradually decreases and the throughput limitation as a result of the maximal block size, are in fact strongly related. This connection, however, is a major obstacle to their resolution. Our motivation in this paper is to decouple the issues of revenue and  maximal block size (which should be determined according to other technical considerations, like block propagation times) and to provide scalability from the economic perspective of the fee market.

While most of auction theory assumes a trusted auctioneer that honestly follows the protocol, a miner (the auctioneer of the block) cannot be trusted. If adding fake bids can increase profit, a miner can add transactions by moving funds from one of its accounts to another. Similarly, a miner can ignore bids. An untrusted miner can manipulate a $2^{nd}$ price auction, for example, by adding a fake bid that is very close to the highest bid. Users will realize that and shade down their bids.
Though cryptography can, in general, help resolve some of these issues, as far as the authors are aware, cryptographic techniques are not applicable to Bitcoin: they assume a trusted setup, that the miner is known in advance, or multiple communication rounds between all parties as in secure multi-party computation~\cite{naor99privacy,lipmaa2002secure,abe2002m+,bogetoft2006practical}.

We analyze here a potential solution to these issues, the \emph{monopolistic auction} \cite{goldberg2006competitive}.\footnote{Termed the ``optimal single price omniscient auction'' in \cite{goldberg2006competitive}, this auction was suggested as a benchmark, and as far as we know, its game-theoretic properties have  not been analyzed in the Bitcoin context.} We observe that in the Bitcoin setting, the monopolistic auction is immune to untrusted miners, on the one hand, and it decouples the revenue issue from the maximal block size issue, on the other. In this auction, given bids $b_1,...,b_n$, the miner chooses which transactions to include in the block. All chosen transactions pay the same fee -- the lowest bid in the block (as must be verified by the protocol). The maximal revenue resulting from this method cannot be increased by adding fake bids or by ignoring true bids. Other manipulations like side payments between miners and users are also not beneficial. In fact, we are not aware of any beneficial miner manipulation that can be exploited in the frame of this auction. Additionally and in contrast to a pay-your-bid auction, increasing the maximal block size does not decrease the revenue of the monopolistic auction. Choosing a threshold price is conceptually similar to dynamically choosing the block size -- the miner chooses the size of the block it creates as a function of the bids received. In some bid instances, blocks will be small, while in others they will be large.

In a pay-your-bid auction, a user must strategically determine the lowest bid it can submit such that it will be included in the block (``bid shading''). Indeed, Bitcoin wallets use various fee estimation techniques. It is not clear how to do this optimally, and in any case, it requires some non-trivial computational effort (data gathering, statistical analysis, etc). Bid shading can also help in worst-case instances of the monopolistic auction, and as a result, \cite{goldberg2006competitive} do not analyze its strategic properties but suggest it only as a benchmark for other truthful auctions they define. These other truthful auctions from~\cite{goldberg2006competitive}  are less applicable vis-\'a-vis our needs (see Section~\ref{sec:summary}). Although the approach is not truthful, in the monopolistic auction, a user, rather than paying her bid, pays  some threshold value (a function of all bids) which w.h.p is only remotely related to one's bid. The main technical novelty of this paper is to show that the expected gain from optimal bid shading relative to truthfully bidding the user's value (which is her maximal willingness to pay) decreases to zero as the number of bids grows. On average, all users will benefit very little, if at all, from bid shading in the monopolistic auction. Since a near optimal strategy is to bid the user's value, bidding in the monopolistic auction is simpler. Section~\ref{sec:first_approach} defines the formal framework used and proves this statement, and Section~\ref{sec:empirical_evaluation} verifies it via simulations using synthetic and actual Bitcoin bid distributions. The greater simplicity of the monopolistic auction is another one of its advantages over the pay-your-bid auction.

Our work analyzes a single block creation, and  ignores ``patient users'', who are willing to wait for future blocks. These users might lower their bids if they anticipate less competition in the future. Two important issues that remain to be investigated in future research are (1) how patient users shade down their bids and (2) how non-myopic users who create persistent transactions affect the bid distribution. Note that these issues affect the revenue not only of the monopolistic auction, but also of the pay-your-bid auction. We do not know, however, which of the two auctions is more strongly affected. Our analytical results suggest that the monopolistic auction collects at least as much revenue from impatient users as Bitcoin's current mechanism. We discuss temporal considerations further in Section~\ref{sec:temporal_considerations}.

\vspace{2mm}

\noindent
{\bf Subsequent Work.} 
After the preprint version of this manuscript was published, Andrew Chi-Chih Yao~\cite{yao18incentive} further analyzed the monopolistic price as a fee mechanism for Bitcoin, in the process resolving our main conjecture, Conjecture~\ref{con:delta_goes_to_zero} -- see the discussion there.
In addition, Mark Friedenbach~\cite{friedenbach17rebatable,friedenbach18forward} proposed an implementation of our proposed fee market as a Bitcoin soft-fork (i.e., an upgrade which is compatible with older version of the Bitcoin client, see~\cite{narayanan16bitcoin} for details regarding a soft-fork).

Basu, Easley, O'Hara and Sirer~\cite{basu19towards} used a modified monopolistic price auction to determine inclusion in the block. They penalize miners if the blocks are not full, and additionally, they average the reward from blocks between different miners. Their goal is to maximize social welfare rather than the miners' revenue.
The main analytical difference between their work and ours is that theirs is based mostly on simulations while ours is proved for the most part formally. 
Perhaps surprisingly, even though their design goal and their mechanism is different than ours, in terms of abstract conclusions, our results agree with theirs: they also establish that the incentive of miners and users to manipulate diminishes as $n$ grows.

\vspace{2mm}

\noindent
{\bf Additional Related Work.} There are relatively few works devoted to analyses of the Bitcoin fee market and fewer still that suggest modifications to the basic mechanism. \cite{kroll2013economics} provided an early analysis of the economics of Bitcoin mining, including a game theoretic analysis of the incentives to mine on the longest chain. \cite{babaioff12bitcoin} considered the incentives of miners to distribute transactions to each other and designed ways to share the fees in exchange for distribution. \cite{carlsten16instability} explored the security of Bitcoin and the incentive to mine blocks properly when block rewards diminish and fees dominate the revenue of miners. They showed that variance in fees may undermine the security of the protocol and subject it to different forms of deviations and attacks. \cite{bonneau16why} explored related bribery attacks on the protocol that are paid for through promises of higher rewards for attackers that construct blocks off the longest chain. \cite{peter2015transaction} considered the removal of the block limit altogether, arguing that delays in large block propagation times, which, in turn, imply a higher likelihood that the block is abandoned (often referred to as an orphaned block), will cause miners to restrict their own block sizes. The paper analyzed the resultant fee market that emerges. \cite{huberman2017monopoly} modeled the transaction fee market and assumed that users benefit less if their transactions are delayed. They showed that such delays together with the congestion that may naturally occur in blocks due to queuing effects can lead to non-zero bids for transacting users even if blocks are not completely full.
As far as we are aware, no previous work has explored a different mechanism for the fee market in crypto-currencies. 

The paper \cite{azevedo18strategy} shows that a class of auctions that includes the monopolistic auction is ``strategy-proof in the large'', when values are i.i.d.~from some distribution with a finite support. Strategy-proofness in the large means that truthfulness is an epsilon BNE and epsilon goes to zero as the number of bidders goes to infinity. Our limit results on the monopolistic auction are similar, although we prove this for a different, stronger, notion -- the discount ratio. See a more detailed discussion in Section \ref{sec:first_approach}.

Two recent papers~\cite{leshno2019bitcoin, chen2019axiomatic} take the axiomatic approach to characterize good block reward schemes, i.e., schemes that satisfy sets of axioms, and show that the proportional selection rule is the unique selection rule satisfying several reasonable axioms.


Several surveys that discuss the state-of-art of integrating game theoretic considerations with cryptocurrencies are given, e.g., in~\cite{bohme15bitcoin,liu19survey,azouvi19tools}.

\paragraph{Terminology}
In general, we use Bitcoin’s terminology (e.g., users, transactions and miners). In some rare cases (especially when citing seminal auction theory results) we use the standard auction theory terminology instead (bidders, auctioneer, etc.).

\paragraph{Structure of this paper.} Section ~\ref{sec:first_approach} lays out the basic model, and further studies the properties of the monopolistic auction. Section ~\ref{sec:p_multibid} further generalizes the monopolistic auction to account for multiple strategic bids submitted by the same user. Section ~\ref{sec:empirical_evaluation} conducts an extensive empirical evaluation of our theoretical claims and conjectures. This is especially useful since our theoretical bounds holds in the large $n$ limit. Section ~\ref{sec:RSOP} conducts a theoretical and experimental analysis of the suitability of the RSOP auction in our setting. Section~\ref{sec:summary} gives concluding remarks and suggestions for future research directions. Appendices~\ref{ap:sp_well_formed} to~\ref{sec:analysis_RSOP_example} provide full proofs, and Appendix~\ref{sec:nomenclature} lists the main notations and symbols we use.

\section{The Monopolistic Auction}\label{sec:first_approach}

There are $n$ transactions, each of which has a privately known value $v_i$ for its inclusion in the block. This is the maximum fee that the transaction is willing to pay. The value can be deduced, e.g., by comparing to an alternative transaction cost via a bank, credit card, etc. 
Transactions submit bids, and based on these, the miner decides which transactions to include in the block. The bid vector is sorted so that \nom{b}{$\bm b$}{The bid vector $\bm b=(b_1,\ldots,b_n) $ satisfying $b_1\geq \ldots \geq b_n$}$=(b_1,\ldots,b_n)$ satisfies  $b_1 \geq \ldots \geq b_n$. Let \nom{k}{$k^*(\bm b)$}{The index $k$ that maximizes $k \cdot b_k$. In case of ties, $k^*$ is chosen to be the maximal such index} be the index $k$ that maximizes $k \cdot b_k$. In case of ties, $k^*$ is chosen to be the maximal such index. Define:
\begin{equation}
R(\bm b)\equiv k^*(\bm b) \cdot b_{k^*(\bm b)} \ , p^{M}(\bm b) \equiv b_{k^*(\bm b)},
\label{eq:monopolistic_revenue_and_price}
\end{equation}
where \nom{R}{$R(\bm b)$}{The monopolistic revenue, $R(\bm b)\equiv k^*(\bm b) \cdot b_{k^*(\bm b)}$.} is termed the ``monopolistic revenue'' and \nom{pm}{$p^{M}(\bm  b)$}{The monopolistic price} is termed the monopolistic price. In the monopolistic auction, the miner includes in the block the $k^*(\bm b)$ transactions who submitted the highest bids and charges a payment of $p^{M}(\bm  b) = b_{k^*(\bm b)}$ from each of these transactions. The miner's revenue from this block is therefore $R(\bm b)$.

An untrusted miner that can add fake bids does not have any incentive to do so in this auction. Specifically, the Bitcoin protocol can verify that all transactions in the same block pay the same price. However, it cannot prevent miners from adding fake bids by exploiting transactions that move funds between two accounts of the same miner. These fake bids can affect the payments of the ``real'' bids and increase the revenue of the miner, as is the case in a second-price auction. In the monopolistic auction, in contrast, the addition of fake bids cannot increase the revenue: given any vector of real bids $\bm{b}$, since all transactions in the block pay the same price, the maximal revenue is $R(\bm b)$ if the miner does not create another block in the near future. However, a user might benefit from submitting $b_i < v_i$, as demonstrated by the following examples:

\begin{example} 
Suppose $n$ users that have the same value $v_i = 1$. If all bids are $b_i = v_i$, then $R(\bm b) = n, k^*(\bm b) = n$, and $p^{M}(\bm  b) = 1$, i.e., all transactions get accepted to the block and they all pay $1$. A strategic user, say $1$, can reduce her payment by reducing her bid to $b_1 = \frac{n-1}{n}$. With this bid, the monopolistic price decreases to $\frac{n-1}{n}$: if all transactions that bid $1$ are accepted, the revenue is $n-1$, and if all transactions that bid at least $b_1 = \frac{n-1}{n}$ are accepted, the revenue is still $n-1$.
\end{example}

The gain from bid shading in this example vanishes as the number of transactions (and the block size) increases. In other cases, gains do not vanish even when the number of transactions grows:

\begin{example} 
\label{ex:large_discount}
$v_1 = \ldots = v_{\frac{n}{2}+1}=2, v_{\frac{n}{2}+2} = \ldots = v_n = 1$, given some even $n$. If user $1$ is truthful and bids $2$, her transaction will get accepted to the block and the price will be $2$. A careful look reveals that she can reduce her price to $1$ by bidding $1$ instead.
\end{example}

However, if the values for this example are taken from realistic distributions, then the probability that it will be realized will usually be very small.

Therefore, rather than allege that there are no profitable deviations from truth-telling, we will only claim that the expected gain from any non-truthful bid diminishes as $n$ grows. For example, if the distribution of values has a finite support, bidding the next lower value in the support {\em weakly dominates} truth-telling. More pronounced bid shading may be beneficial; We will show, however, that the expected gain from such a deviation or from any other deviation becomes extremely small as $n$ grows. Therefore, we believe that it makes sense to conclude that no deviations will likely happen.

\begin{example} \label{example::equal_rev}
Consider the distribution $F$ with finite support: $\Pr(v=1)=0.5, \Pr(v=2)=0.5$. For simplicity, assume that users can only submit bids of 1 or 2. Every user with $v=2$ has a weakly dominant strategy to bid 1, since she derives 0 utility when the monopoly price is set to be 2, and by bidding 1 she increases the chance that the monopoly price will be 1. However, the probability that bidding 1 instead of 2 will actually make a difference is vanishingly small. Specifically, a user with $v=2$ can improve the outcome (i.e., change the empirical monopoly price from 2 to 1) by bidding b=1, when the number of users is $2n$, only in the event that there are exactly $n+1$ users with value equals to 2. The probability of this event is $\frac{\binom{2n}{n+1}}{2^{2n}} = \Theta\left(\frac{1}{\sqrt{n}}\right)$. We note that this is the most extreme example -- as the support size increases, the bid shading suggested in this example becomes negligible.
\end{example}

We validate this intuition more generally via both theoretical and empirical analyses. First we define some terms. Fix a user $i$ and a vector of bids $\bm b_{-i}\equiv (b_1,\ldots, b_{i-1},b_{i+1},\ldots,b_n)$ of the other users. Define the ``strategic price'':
\begin{equation}
p^{S}(\bm b_{-i}) \equiv \\
\inf \{b_i \in \R\ |\ p^{M}(b_i,\bm b_{-i}) \leq b_i  \} = \min \{b_i \in \R\ |\ p^{M}(b_i,\bm b_{-i}) \leq b_i  \} .    
\label{eq:ps_well_defined}
\end{equation}

The equality is proven in Appendix~\ref{ap:sp_well_formed}.
In words, \nom{ps}{$p^{S}(\bm b_{-i})$}{The strategic price: The lowest possible bid for $i$ to be included in the block, when the other bids are $\bm b_{-i}$} is the lowest possible bid for $i$ to be included in the block, when the other bids are $\bm b_{-i}$.
Figure~\ref{fig:p_mon} gives an example. This definition does not capture a case where a user can provide false-name bids (i.e., can split her transaction into several transactions), which is discussed in Section~\ref{sec:p_multibid}.

\begin{figure}[t!]
\centering \includegraphics[width=\columnwidth]{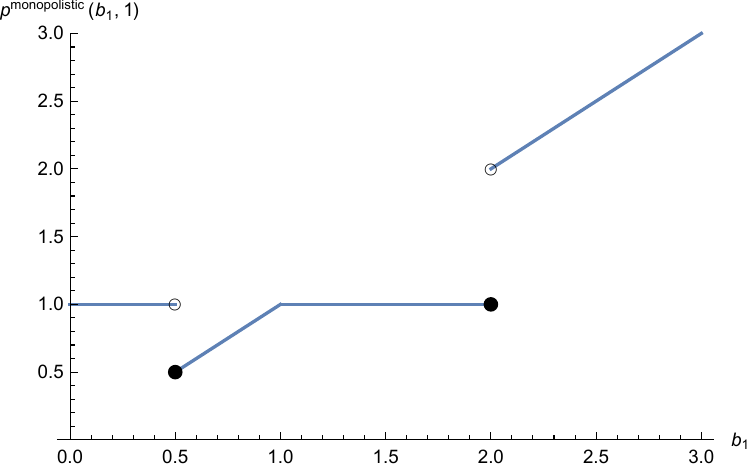}
\caption{The function $p^{M}(b_1,1)$ as a function of $b_1$, demonstrating that $p^{M}$ is neither monotone nor continuous. As can be seen, the strategic price of the first user is $0.5$.}
\label{fig:p_mon}
\end{figure}

Figure~\ref{fig:p_strategic} presents the strategic prices in an example with $n=256$ users. In this example, $p^{S}$ is monotonically decreasing among winning users (higher winning bids imply lower strategic prices) but not among all users. This property is formally proved in Claim~\ref{cl:mon_for_win_st_p} in Appendix~\ref{sec:auxiliary_claims}. Additionally, here and in most other simulations, almost all winning bids have the same strategic price, the exception being the very few lowest winning bids (i.e., those with prices slightly higher than the monopolistic price). The strategic price of these bids is higher, i.e., they gain less from being strategic.

While $p^{S}(\bm b_{-i})$ is the minimal price that user $i$ must pay to be included in the block, she will pay $p^{M}(v_i,\bm b_{-i})$ if she bids her true value. This possible gain from bid shading is captured by the notion of the ``discount ratio'':
\[
\delta_i(v_i,\bm b_{-i}) \equiv
\begin{cases}
1 - \frac{p^{S}(\bm b_{-i})}{p^{M}(v_i,\bm b_{-i})} & \text{if } v_i\geq p^{S}(\bm b_{-i})  \\
0 & \text{otherwise.}
\end{cases}
\label{eq:delta_i}
\]
Clearly, $0 \leq$ \nom{deltai}{$\delta_i(v_i,\bm b_{-i})$}{The discount ratio} $\leq 1$. If $v_i < p^{S}(\bm b_{-i})$, every bid of at most $v_i$ loses. Hence, the gain from bid shading is $0$. If $v_i \geq p^{S}(\bm b_{-i})$, user $i$ can win and pay $p^{S}(\bm b_{-i})$. With an honest bid, she will pay $p^{M}(v_i,\bm b_{-i})$. She can thus save a percentage of $1 - \frac{p^{S}(\bm b_{-i})}{p^{M}(v_i,\bm b_{-i})}$ from her price by strategic bid shading.

Example~\ref{ex:large_discount} above shows that the discount factor does not vanish in the worst case, even as $n$ grows. In this example, the discount factor is $\frac{1}{2}$ for all possible values of $n$. We therefore conduct an average-case analysis. Assume that all true values are drawn i.i.d.~from some distribution \nom{F}{$F$}{All true values are drawn i.i.d.~from some distribution $F$ on $\mathbb{R}_{>0}$} on $\mathbb{R}_{>0}$. The average discount ratio is then
\nom{deltaaverage}{$\Delta_n^{average}$}{$ =  \E_{ (v_1,\ldots,v_n) \sim F} [\delta_1(v_1,\bm v_{-1})]$}$ =  \E_{ (v_1,\ldots,v_n) \sim F} [\delta_1(v_1,\bm v_{-1})]$ (the choice of user $1$ is arbitrary since all are symmetric a-priori). Note that this definition implicitly assumes that $\bm b_{-1} = \bm v_{-1}$. This is similar to the usual logic in equilibrium analysis -- assume all others are truthful and show that user $i$ does not have a reason to deviate from truthfulness as well. Indeed, we will show that all discount ratios go to zero as $n$ goes to infinity. We therefore rely on this logic in all our definitions.

We also consider two stronger notions. In the first, for each realization $v_1,...,v_n$, we consider the maximal discount ratio among all users:
\[
\delta_{max}(\bm v) \equiv \max_{i} \delta_i(v_i, \bm v_{-i})\ ,\ \Delta_n^{max} \equiv \E_{ (v_1,\ldots,v_n) \sim F}  [  \delta_{max}( \bm v)]
\]
Clearly, for all $n$, \nom{deltanmax}{$\Delta_n^{max}$}{$\Delta_n^{max} \equiv \E_{ (v_1,\ldots,v_n) \sim F}  [  \delta_{max}( \bm v)$} $\geq \Delta_n^{average}$  since
for every $\bm v$ and every $i$, \nom{deltamax}{$\delta_{max}(\bm v)$}{$\delta_{max}(\bm v) \equiv \max_{i} \delta_i(v_i, \bm v_{-i})$} $\geq \delta_i(v_i, \bm v_{-i})$.

\begin{figure}[t!]
\centering \includegraphics[width=\columnwidth]{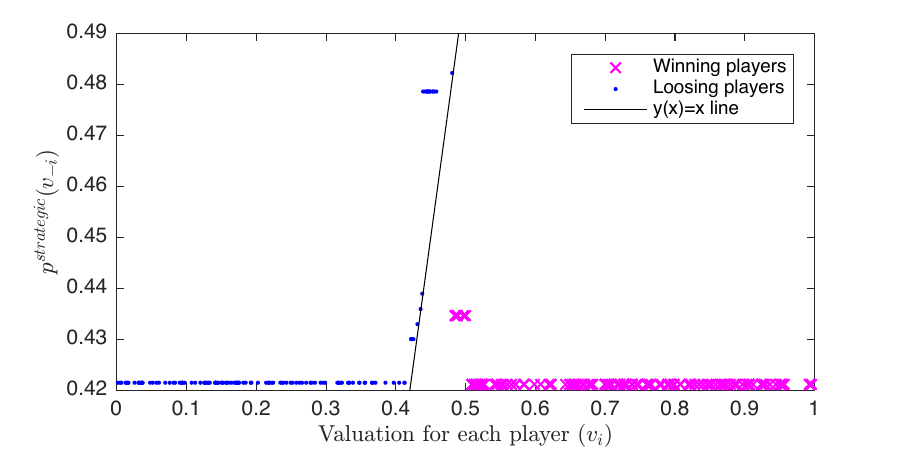}
\caption{An example with $n=256$ bids sampled i.i.d. from the uniform distribution on $[0,1]$. For each user, the x-axis shows her values and the y-axis shows her strategic price. Winning bids, which are marked with ``x'', must be at least as high as their respective strategic price, i.e., to the right of the black line.}
\label{fig:p_strategic}
\end{figure}

\begin{theorem}
For any distribution $F$ with a finite support size,
$\lim_{n \rightarrow \infty}\Delta_n^{max} = 0$.
\label{thm:delta_max_negligible}
\end{theorem}

\noindent
Section~\ref{sec:proof-of-thm-1} proves this theorem.

In the second extension, defined only for distributions with a finite bounded support, we fix player 1 and assume that she deterministically has a value that maximizes her discount ratio (the worst possible value for her, from our perspective) while all other values are probabilistic. This depicts a situation commonly assumed in game theory wherein a user knows her own value and (only) the distribution of the other values. Let \nom{vmax}{$v_{max}$}{$v_{max} \equiv \max \mbox{Support(F)}$} $\equiv \max \mbox{Support(F)}$.
\begin{equation}
  \delta_n^{GT}(\bm b_{-1}) \equiv \max_{v_1 \in \mbox{Support}(F)} \delta_1(v_1, \bm b_{-1}) \ \ \ = \delta_1(v_{max}, \bm b_{-1}).
\end{equation}
The equality is proven in Claim \ref{cl:delta_is_monotone}.
\begin{equation}
  \Delta_n^{GT} \equiv  \E_{\bm (v_2,\ldots,v_n)\sim F} [\delta_n^{GT}(\bm v_{-1})]  
\end{equation}

\begin{theorem}
\label{thm:delta_max_gt}
For any distribution $F$ with a finite support size,
$\lim_{n \rightarrow \infty}$ \nom{Deltangt}{$\Delta_n^{GT}$}{$\Delta_n^{GT} \equiv  \E_{\bm (v_2,\ldots,v_n)\sim F} [\delta_n^{GT}(\bm v_{-1})]  $} $= 0$.
\end{theorem}

\noindent
The proof of this theorem is very similar to that of Theorem~\ref{thm:delta_max_negligible}, as explained in Section~\ref{sec:proof-of-thm-1}.

Our empirical analysis focuses on $\Delta_n^{max}$, since it is well-defined even for distributions with unbounded support.

While $\lim_{n \rightarrow \infty} | \Delta_n^{GT} - \Delta_n^{max} | = 0$, the relation between $\Delta_n^{GT}$ and $\Delta_n^{max}$ depends on the distribution, i.e., one term is not always larger than the other.
For example, consider the distribution $\Pr(v_i = 1) = p$ and $\Pr(v_i = \epsilon) = 1-p$, where $0 < \epsilon < 1$. In this case, if $p<0.5$, then $\Delta_{n=2}^{GT} > \Delta_{n=2}^{max}$, and if $p>0.5$, then $\Delta_{n=2}^{GT} < \Delta_{n=2}^{max}$, since:
\begin{align*}
\Delta_{n=2}^{GT} =&\E_{v_2\sim F}[\delta_1(1,v_2)]= p\delta(1,1)+(1-p)\delta(1,\epsilon) = p\cdot \frac{1}{2} + (1-p)\cdot(1-\frac{\epsilon}{2})\\
\Delta_{n=2}^{max} =& \E_{(v_1,v_2)\sim F}[\max_{i} \delta_i(v_1,v_2)]\\
=& \ p^2 \delta_1(1,1) + (1-p)^2\delta_1(\epsilon,\epsilon) +2p(1-p) \max_i \delta_i(1,\epsilon)\\
=& \ p^2 \delta_1(1,1) + (1-p)^2\delta_1(\epsilon,\epsilon) +2p(1-p) \delta_1(1,\epsilon)\\
=& \ (1-2p(1-p))\cdot \frac{1}{2} + 2p(1-p)\cdot(1-\frac{\epsilon}{2})
\end{align*}

\noindent
We believe that our theorems can be further generalized:
\begin{conjecture}[Nearly Bayesian-Nash Incentive-Compatibility]
~~
\begin{enumerate}
\item For any distribution $F$, $\lim_{n\rightarrow \infty}\Delta_n^{average} = 0$. In particular, $\Delta_n^{average}=O\left(\frac{1}{\sqrt n}\right)$, where the constant in the $O(\cdot)$ notation may depend on $F$.\label{it:delta_n_average_zero}

\item If $F$ has a bounded support (still possibly with an infinite cardinality) $\lim_{n\rightarrow \infty}\Delta^{\max}_n = 0$. In particular, $\Delta^{\max}_n = O\left(\frac{1}{\sqrt n}\right)$. The same holds for $\Delta_n^{GT}$.
\label{it:delta_n_max_zero}

\item There exists a distribution $F$ with unbounded support such that\\ $\lim_{n\rightarrow \infty}\Delta^{\max}_n > 0$. The same holds for $\Delta_n^{GT}$.\label{it:delta_n_max_non_zero}
\end{enumerate}
\label{con:delta_goes_to_zero}
\end{conjecture}

\noindent
The $O\left( \frac{1}{\sqrt n}\right)$ in the above conjecture can be achieved by Example~\ref{example::equal_rev}. 

Recently, Andrew Chi-Chih Yao resolved many issues that were left open in this work. Indeed, ``The main purpose [of Yao's paper] is to settle the Nearly IC Conjecture for the monopolistic price in the positive"~\cite{yao18incentive}.
Specifically, the first sentences in items~\ref{it:delta_n_average_zero} and~\ref{it:delta_n_max_zero} in the conjecture above were proved, and a slightly weaker claim was shown than the ``in particular" claims in these items. This was proved also when accounting for multiple strategic bids -- see Section~\ref{sec:p_multibid}. Item~\ref{it:delta_n_max_non_zero} was proved for the inverse distribution (see Eq.~\eqref{eq:inverse_distribution}), which was our primary candidate for this conjecture based on numerical data -- see Section~\ref{sec:empirical_evaluation}. 

\vspace{2mm}

\noindent
{\bf Game Theoretic Equilibrium Analysis. } In a Bayesian-Nash equilibrium (BNE), user $i$'s utility is $u_i(v_i, b_i, \bm b_{-i}) = v_i - p^M(\bm b)$ if $b_i \geq p^M(\bm b)$ and $0$ otherwise. A strategy $s_i: \R^+ \rightarrow \R^+$ gives $i$'s bid $s_i(v_i)$. $(s_1(\cdot),...,s_n(\cdot))$ is an \nom{epsilonBNE}{$\epsilon$-BNE}{An $\epsilon$-Bayesian Nash Equilibrium} if $\forall i$ and $v_i,b_i\in Support(F)$: $\E_{\bm v_{-i}}  [u_i(v_i, s_i(v_i), \bm v_{-i})] \geq \E_{\bm v_{-i}}  [u_i(v_i, b_i, \bm v_{-i})] -\epsilon$. The ``truthful'' strategy $s_i(v_i)=v_i$ is an $\epsilon$-BNE if for all $i$,
\begin{equation}
\label{eq:BNE}
 \max_{v_i,b_i \in Support(F)} \E_{\bm v_{-i}}  [u_i(v_i, b_i, \bm v_{-i}) - u_i(v_i, v_i, \bm v_{-i})]  \leq \epsilon
\end{equation}
Our discount-ratio analysis is similar but formally different. In fact, our approach is stronger.

\begin{theorem}
\label{thm:bne-vs-discount-ratio}
Assume that $F$ is bounded, with $v_{max}$ being the maximal point on the support of $F$. For any $n$, $s_i(v_i) = v_i$ is an $\epsilon$-BNE with $\epsilon = v_{max} \cdot \Delta^{GT}_n$ in the monopolistic auction with $n$ users.
\end{theorem}

\noindent
The proof is given in
 Appendix~\ref{sec:proof-of-thm-bne-vs-discount-ratio}.

If $F$ has a finite support, Theorem~\ref{thm:delta_max_gt} states that $\lim_{n \rightarrow \infty} \Delta^{GT}_n = 0$. Therefore, there exists a function $N(\epsilon)$ such that, for every $\epsilon > 0, n \geq N(\epsilon)$, truthfulness is an $\epsilon$-BNE in the game with $n$ users.
The opposite direction is not true; if truthfulness is an $\epsilon$-BNE, this does not imply that the discount ratio is at most $\epsilon$.
This is because the deviation $b_i$ in the BNE analysis is fixed before taking the expectation, while in our analysis $b_i$ is a best response to a specific $\bm v_{-i}$ and we take the expectation over these possibly different best responses. In this respect, the discount-ratio analysis is stronger.

\subsection{Proof of Theorem~\ref{thm:delta_max_negligible}}
\label{sec:proof-of-thm-1}

Recall that $F$ is the distribution over true values. Given $\bm v$, define:\\ \nom{istar}{$i^*$}{$i^* \equiv \mbox{argmax}_{i=1,...,n} v_i$}$ \equiv \mbox{argmax}_{i=1,...,n} v_i$, \nom{kstar}{$k^*$}{$k^* \equiv k^*(\bm v_{-i^*})$} $\equiv k^*(\bm v_{-i^*})$, and \nom{pS}{$p^{S}$}{$p^{S} \equiv p^{S}(\bm v_{-i^*})$} $\equiv p^{S}(\bm v_{-i^*})$ (`S' stands for `Strategic').

\begin{lemma} For any $F$ with finite support size, there exists a constant $c>0$ (which may depend on $F$ but not on $n$) s.t.
$\lim_{n\rightarrow \infty }\Pr(k^* < c \cdot n) = 0$.
\label{le:small_block}
\end{lemma}
\begin{proof}
Let $v_{max}=\max \mbox{Support}(F)$, $k_{max}$ be the number of users in $\bm v$ whose value is $v_{max}$, and $p_{max}=\Pr_{v_i \sim F}[v_i=v_{max}]$. By linearity of expectation, $ \E[k_{max}] = p_{max}\cdot n$. A Chernoff bound implies that $\lim_{n\rightarrow \infty} \Pr\left(k_{max} < \frac{9 n p_{max}}{10}\right) = 0$. Users who bid $v_{max}$ win; therefore, $k^* +1 \geq k_{max}$. Thus,\\ $\lim_{n\rightarrow \infty} \Pr\left(k^* +1 < \frac{9 n p_{max}}{10} \right) = 0$.
Choosing $c=\frac{8 p_{max}}{10}$ completes the proof of the lemma.
\end{proof}

\noindent
Appendix~\ref{sec:auxiliary_claims} shows three useful properties:
\begin{restatable}{property}{pstrategicalternativedef}
$\forall \bm v,i$, $p^{M}(p^{S}(\bm v_{-i}),\bm v_{-i}) = p^{S}(\bm v_{-i})$.
\label{cl:p_strategic_alternative_def}
\end{restatable}
\begin{restatable}{property}{xequaly}Let $y$ be the smallest element in the support of $F$ that is at least $p^{S}$. Then, $y=p^{M}(\bm v)$ implies $p^{S} \geq \frac{k^*}{k^*+1} \cdot p^{M}(\bm v)$.
\label{cl:x_equal_y}
\end{restatable}
\begin{restatable}{property}{deltabiggerforlargerbidders}$\forall \bm v,i,j$: $v_i \geq v_j$ implies $\delta_i(v_i,\bm v_{-i}) \geq \delta_j(v_j,\bm v_{-j})$. 
\label{cl:delta_bigger_for_larger_bidders}
\end{restatable}

\noindent
With these, we prove a second lemma:

\begin{lemma}
$\lim_{n\rightarrow \infty} \Pr(p^{S} < \frac{k^*}{k^*+1}p^{M}(\bm v))=0$.
\label{le:small_p_strategic}
\end{lemma}
\begin{proof}
Define $A$ as the event where $p^{S} < \frac{k^*}{k^*+1}p^{M}(\bm v)$. The proof defines an event $B$ s.t.~(i) $\lim_{n\rightarrow \infty} \Pr(B)=0$, and (ii) $A \subseteq B$. This implies $\lim_{n\rightarrow \infty} \Pr(A)=0$ as claimed.

Define \nom{num}{$\mbox{num}(\bm v, z)$}{$\mbox{num}(\bm v, z) \equiv |\{v_i|v_i\geq z \}|$} $\equiv |\{v_i|v_i\geq z \}|$ and the random variables $n_z = \mbox{num}(\bm v,z)$ for any real number $z$.
%
%
%
Let $B$ be the event in which $\exists x,y \in \mbox{Support}(F) \mbox{ s.t. } x>y \mbox{ and } n_x \cdot x > n_y \cdot y \geq (n_x - 1)x$. Since the support is of finite size, there is a fixed number of such pairs $x>y\in \mbox{Support}(F)$. By Claim~\ref{cl:B_has_zero_probability} in Appendix~\ref{ap:binomial_trinomial} and the union bound over these finite set of pairs, we conclude that $\lim_{n \rightarrow \infty} \Pr(B) = 0$. We show that $A \subseteq B$, i.e., $p^{S} < \frac{k^*}{k^*+1}p^{M}(\bm v)$ implies $\exists x>y \in \mbox{Support}(F) \mbox{ s.t. } n_x \cdot x > n_y \cdot y \geq (n_x - 1)x$. Let $x=p^{M}(\bm v)$ and let $y$ be the smallest element in the support of $F$ which is at least $p^{S}$. By Property~\ref{cl:x_equal_y}, the event $p^{S} < \frac{k^*}{k^*+1}p^{M}(\bm v)$ implies $x>y$. Since $p^{M}(v_{i^*},\bm v_{-i^*})=x$, it follows that $n_x \cdot x > n_y \cdot y$. Furthermore,

\vspace{-4mm}

\begin{align*}
n_y \cdot y \geq n_{p^{S}} \cdot p^{S} = \mbox{num}(\bm v, p^{S})  \cdot p^{S} &= \\
\mbox{num}((p^{S},\bm v_{-i^*}), p^{S}) \cdot p^{S}
&\geq \\ 
\mbox{num}((p^{S}, \bm v_{-i^*}), x) \cdot x &=(n_x-1)x
\end{align*}
where the first step follows since $n_y = n_{p^{S}}$ and $p^{S} \leq y$, the third step follows since
$v_{i^*} \geq x > y \geq p^{S}$, the fourth step follows from
Property~\ref{cl:p_strategic_alternative_def} that shows that
$p^{M}(p^{S}, \bm v_{-i^*}) = p^{S}$,
and the fifth step follows since $v_{i^*}\geq x > y \geq p^{S}$.

\end{proof}

Let the ``bad'' event $E_1$ be the case where $k^* < c \cdot n$ ($c$ is taken from Lemma~\ref{le:small_block}) and the ``bad'' event $E_2$ be the case where  $p^{S} < \frac{k^*}{k^*+1}p^{M}(\bm v)$. By Property~\ref{cl:delta_bigger_for_larger_bidders}, $\delta_{max}(\bm v) = \delta_{i^*}(\bm v)$. Therefore, $\Delta^{max}_n=\E_{\bm v}[\delta_{i^*}(\bm v)]$. If $E_2$ does not hold then $p^{S} \geq \frac{k^*}{k^*+1} p^{M}(\bm v)$ and therefore $\delta_{i^*}(\bm v) \leq \frac{1}{k^*+1}$. To conclude:


\begin{align*}
\lim_{n \rightarrow \infty} \Delta_n^{max} =&\lim_{n \rightarrow \infty}  \Big[ \Pr(E_1\cup E_2)\E_v[\delta_{i^*}(v_{i^*},\bm v_{-i^*})|E_1 \cup E_2]\\ 
& + \Pr(E_1^c \cap E_2^c) \E_v\left[\delta_{i^*}(v_{i^*},\bm v_{-1})|E_1^c \cap E_2^c\right] \Big] \\
\leq& \lim_{n \rightarrow \infty}  \left[ \Pr(E_1) + \Pr(E_2) + \Pr(E_1^c \cap E_2^c) \cdot \E_v[\frac{1}{k^*+1}|E_1^c \cap E_2^c] \right]\\
\leq& \lim_{n \rightarrow \infty} \left [ \Pr(E_1) + \Pr(E_2) + \Pr(E_1^c \cap E_2^c) \cdot \frac {1}{c\cdot n} \right]\\
\leq& \lim_{n \rightarrow \infty}  \left[ \Pr(E_1) + \Pr(E_2) +  \frac {1}{c\cdot n} \right] = 0,
\end{align*}

\noindent
implying Theorem~\ref{thm:delta_max_negligible}.
The proof of Theorem~\ref{thm:delta_max_gt} is very similar.
The main difference between the two theorems is that in the latter, we need to set the value of the first player to the maximal value in the support of $F$. All statements of the above proof then hold with respect to the first player, instead of player $i^*$.

\section{Multiple strategic bids}
\label{sec:p_multibid}
In some cases it is beneficial to split one's bid to several separate transactions with several separate bids. In fact, such a strategy sometimes enables a losing transaction to be included in the block, as the following example demonstrates:

\begin{example} 
Let $\bm v = (5, 2, 1, 1)$. With bids $\bm b = \bm v$, the monopolistic price is $5$ and the second user loses. However, if she submits two separate transactions with a bid of $1$ each, the monopolistic price will be $1$ and all transactions will be included.
\label{ex:multibid}
\end{example}

Our empirical evaluation below accounts for such situations, showing that the benefit from using multiple bids also goes to zero as $n$ increases. For this purpose, we generalize $p^{S}$ to capture the possible benefit of splitting a single bid to $u$ different bids. Formally, define \nom{pmultibid}{$p^{multibid}(\bm b_{-i})$}{A generalization of $p^{S}$ to capture the possible benefit of splitting a single bid to $u$ different bids} as the minimum of $u \cdot b^{(u)}_i$ over all $u \in \N^{+}$ and all $b^{(1)}_i \geq \ldots \geq b^{(u)}_i \in \R$ such that $b^{(u)}_i \geq p^{M}(b^{(1)}_i,\ldots,b^{(u)}_i,\bm b_{-i})$. Choosing $u=1$ gives us $p^{S}$, hence $p^{S}(\bm b_{-i}) \geq p^{multibid}(\bm b_{-i})$. Example~\ref{ex:multibid} shows that there are cases in which the inequality is strict. It is easy to show that  $p^{multibid}(\bm b_{-i})$ can be written as:

\vspace{-6mm}

\begin{equation}
p^{multibid}(\bm b_{-i})\equiv \min \{u \cdot b_i\ |\ u \in \N^+,\ b_i\in \R,\ b_i\geq p^{M}(\overbrace{b_i,\ldots,b_i}^{\text{u times}},\bm b_{-i})\}.\label{eq:p_multibid}
\end{equation}

\noindent
The minimum is well defined using an argument similar to that used for $p^{S}$. In particular,
for any positive integer $u$, use $f_u(b_i) = p^{M}(b_i,...,b_i,b_{-i})$ (where $b_i$ appears $u$ times). By Claim~\ref{cl:u_cannot_be_large}, $u \in \{1,...,n\}$. Since for every $u$, the infimum is contained in the set, then so is the infimum over the union over $u \in \{1,...,n\}$.

We believe that Theorem~\ref{thm:delta_max_negligible} and Conjecture~\ref{con:delta_goes_to_zero} hold in the case of multiple bids, where in the definition of $\delta$, $p^{multibid}$ replaces $p^{S}$. In our empirical evaluation we use $p^{multibid}$ instead of $p^{S}$. In all the distributions we examined, except the discrete distribution, we {\em never} encountered a case in which the strategic player has an advantage placing multiple bids. Even in the discrete case, the effect of such multiple bids was negligible. 

The definition of $p^{multibid}$ gives little information on its algorithmic computation, since the optimization is done on an infinite set. We provide an alternative definition which yields a polynomial-time algorithm. We use this algorithm in our simulations. Fix a player $i$ and $\bm w = \bm v_{-i}$. Assume w.l.o.g.~that $\bm w$ is sorted, $w_1\geq w_2, \ldots, \geq w_{n-1}$. For every $k^*(\bm w) \leq j \leq n-1$, define
\[
f(j) \equiv \max\left\{ \left\lceil{\frac{R(\bm w)}{w_j} }\right \rceil ,j+1\right\}.
\]
Intuitively, if user $i$ (the strategic user) adds $f(j)-j$ additional bids to $\bm w$, all equal to $\frac{R(\bm w)}{f(j)}$, then this will be the new monopolistic price, and user $i$ will win. It is thus clear that these manipulations should be considered in order to determine the optimal solution of Eq.~\eqref{eq:p_multibid}. While a-priori it may be possible to consider other manipulations as well, the next theorem shows that considering this type of manipulations is sufficient.
\begin{theorem}
~~

\vspace{-4mm}

\[p^{multibid}(\bm w) = \min_{k^*(\bm w) \leq j \leq n-1} \frac{R(\bm w)}{f(j)}(f(j)-j).\] Furthermore, if $j^*$ is the index that minimizes this term, then $b^* = \frac{R(\bm w)}{f(j^*)}$ and $u^* = f(j^*) - j^*$ minimizes the r.h.s.~of Eq.~\eqref{eq:p_multibid}.
\label{thm:multibid_computation}
\end{theorem}

For intuition, we refer the reader to Example ~\ref{ex:multibid}. Indeed, in that example, $\bm w = (5,1,1), k^*(\bm w) = 1, R(\bm w) = 5$. We can calculate $f(1)=2,f(2)=5,f(3)=5$. Therefore, 
$j^*=3$, resulting in $b^*=1, u^*=2$ as described in the example above.

\begin{proof}
To prove this theorem, we begin with a useful Lemma:

\begin{lemma}
Let $b,u$ be the arguments which minimize Eq.~\eqref{eq:p_multibid}, $j$ be the integer which satisfies $w_j \geq b > w_{j+1}$ and $u^*=f(j)-j$. Then, $u^* \geq 1$ and $u \geq u^*$.
\label{lem:u_geq_u_start}
\end{lemma}
\begin{proof}
Since $f(j) \geq j+1$, $u^* \geq 1$. If $f(j) = j+1$, $u^* = 1$ and the claim immediately follows. Thus assume that $f(j) = \ceil{\frac{R(\bm w)}{w_j}}$. Suppose towards a contradiction that $u \leq u^*-1$. Then,
\begin{equation}
R(\overbrace{b,\ldots,b}^{\text{u times}},\bm w) =
b \cdot (u+j) \leq w_j \cdot (u^* -1 + j ) = w_j\cdot(f(j)-1) < R(\bm w)
\label{eq:revenue_in_p_multibid}
\end{equation}
where the first step follows since by Claim~\ref{cl:p_multibid-equality}
\[
b = p^{M}(\overbrace{b,\ldots,b}^{\text{u times}},\bm w)
\]
and since there are exactly $j$ bids in $\bm w$ that are at least $b$; and the last inequality follows from $f(j)<\frac{R(\bm w)}{w_j}+1$.

However Eq.~\eqref{eq:revenue_in_p_multibid} shows a contradiction since adding bids can only increase the monopolistic revenue. This completes the proof of Lemma~\ref{lem:u_geq_u_start}.
\end{proof}




We now prove Theorem~\ref{thm:multibid_computation}. Let $b,u$ be the arguments that determine\\ $p^{multibid}(\bm w)$ (i.e., $(b,u)$ minimizes Eq.~\eqref{eq:p_multibid}). Let $j$ be the integer that satisfies $w_j \geq b > w_{j+1}$ (where $j=n-1$ if $w_{n-1} \geq b$). By Claim~\ref{cl:k_start_multibid}, $j \geq k^*(\bm w)$. Let $u^*=f(j)-j$ and $b^* = \frac{R(\bm w)}{f(j)}$. Then,


\[
b \cdot (u + j) = R(\overbrace{b,\ldots,b}^{\text{u times}},\bm w)
\geq R(\bm w) = b^* \cdot (u^*+ j)
\]
where the first step is the same as the first step in Eq.~\eqref{eq:revenue_in_p_multibid} above.
Thus, $b \cdot u \geq b^* \cdot u^* + j \cdot (b^* - b)$.
By Lemma~\ref{lem:u_geq_u_start}, $u \geq u^*$, which implies
$b \cdot u \geq b^* \cdot u^*$ (since $b \cdot u < b^* \cdot u^*$
and $u \geq u^*$ implies $b < b^*$ and $b \cdot u < b^* \cdot u^* + j \cdot (b^* - b)$, a contradiction). Thus,
\[
p^{multibid}(\bm w) = b \cdot u \geq b^* \cdot u^* \geq \min_{k^*(\bm w) \leq j \leq n-1} \frac{R(\bm w)}{f(j)}(f(j)-j),
\]
and the first part of the theorem follows. The second part follows in a straight-forward way from the first part.
\end{proof}

\section{Empirical Evaluation}
\label{sec:empirical_evaluation}
We provide empirical evidence to support the above conjectures and to supply greater detail on the rate at which the discount ratio converges to zero. We use both synthetic data as well as transaction data taken from the Bitcoin blockchain. The synthetic data are generated using four distributions:
\begin{enumerate}
\item A uniform discrete distribution over the integers $1,\ldots,100$. This distribution has a finite support size and therefore satisfies  the requirements of Theorem~\ref{thm:delta_max_negligible}. 
\item The uniform distribution over $[0,1]$. Notice that here the support size is infinite.
\item The half normal distribution.\footnote{Recall that the half normal distribution arises from taking the absolute value of a normal random variable. This is needed in our setting since valuations are non-negative.} Here, the probability for value $x$ decreases exponentially with $x$, hence, even though arbitrarily high transaction values will be seen, they are highly unlikely.
\item The \nom{inverse distribution}{inverse distribution}{$F(x)=1-\frac{1}{x},\quad x\in[1,\infty)$} 
\begin{equation}
 F(x)=1-\frac{1}{x},\quad x\in[1,\infty).
 \label{eq:inverse_distribution}
\end{equation}
Here, the probability for a value $x$ decreases polynomially with $x$.
\end{enumerate}

\noindent
The uniform distribution has no tail, the half normal distribution has a light tail (it decreases exponentially fast), and the inverse distribution has a heavy tail.

\begin{figure}
\centering \includegraphics[width=0.8\columnwidth]{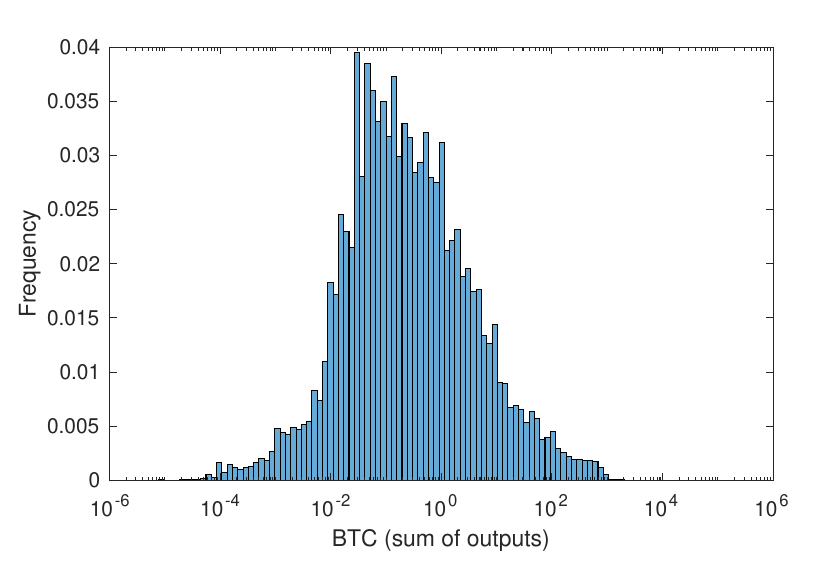}
\caption{A histogram of the sum-of-outputs on the x-axis (in log scale) and  its frequency in the collected data (on the y-axis).}
\label{fig:historgram_sum_of_outputs}
\end{figure} 

Bitcoin blockchain data was collected from 1000 consecutive blocks, which constitute roughly one week of activity ending on October 28th, 2016.
Figure~\ref{fig:historgram_sum_of_outputs} shows a histogram of the collected data.
The data obviously do not contain the values, since this is not how Bitcoin operates. The simulations estimate the value $v$, as a function $v=v(x)$ of the transaction size $x$ ($x$ is the sum of all outputs of the Bitcoin transaction). We use three alternative functions: $v(x) = \log x, v(x) = \sqrt{x}, v(x)=x$. Note that multiplying all bids by a scalar does not change the discount ratio, which is the case when, e.g., the value is some percentage of $x$.

We empirically evaluated $\Delta_n^{average}$ and $\Delta_n^{max}$ (where $p^{multibid}$ replaces $p^{S}$) as a function of the number of bids, $n$. For each $n \in \{2^3,2^4,\ldots,2^{17}\}$, we conducted 100 simulation runs. Each run samples $n$ bids and calculates the $n$ empirical discount ratios. $\Delta_n^{average}$ is the average of the $n$ individual discount ratios and $\Delta_n^{max}$ is the maximum over the $n$ individual discount ratios. Each point in the graph is the average of 100 runs.

\begin{figure}
\centering \includegraphics[width=\columnwidth]{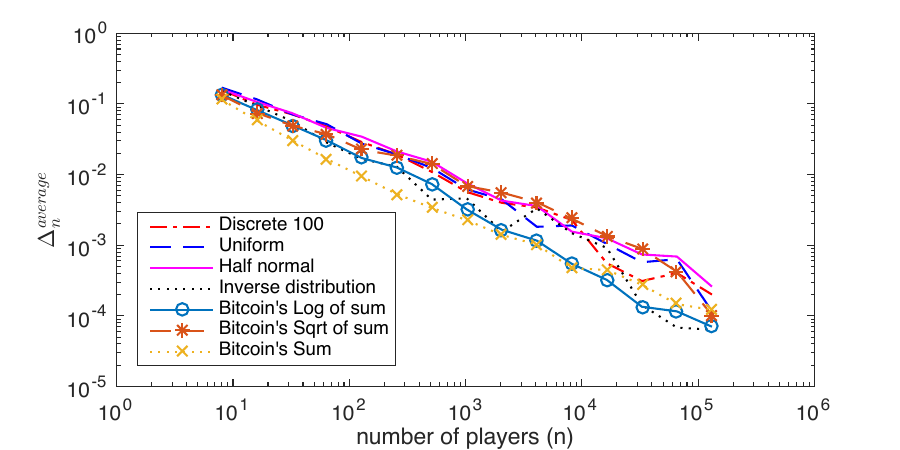}
\caption{The average discount ratio of a player from selfish bidding as a function of the number of players that participate.}
\label{fig:delta_avg}
\end{figure}

Figure~\ref{fig:delta_avg} shows that $\Delta_n^{average}$ behaves almost identically for all tested distributions. In particular, as stated in the first part of Conjecture~\ref{con:delta_goes_to_zero}, $\Delta_n^{average}$ decreases linearly with the number of bids $n$ and for all distributions, even those with an infinite and unbounded support size.

Figure~\ref{fig:delta_max} shows that $\Delta_n^{max}$ behaves differently for some of the tested distributions. For the uniform distribution, $\Delta_n^{max}$ decreases linearly with the number of bids, supporting the second part of Conjecture~\ref{con:delta_goes_to_zero}. The half normal distribution behaves similarly (even though it is not bounded). For the inverse distribution, $\Delta_n^{max}$ does not seem to decrease with the number of users, and we believe that this is in fact an example that supports the third part of Conjecture~\ref{con:delta_goes_to_zero}. The three Bitcoin distributions behave as follows: the log of the sum decreases the fastest, the square root of the sum decreases more slowly, and the sum decreases the slowest. For example, for $n=2048$ (roughly the current Bitcoin block-size), $\Delta_n^{max}$ is about $0.2\%, 5\%,$ and $19\%$ for the three distributions, respectively. When $n=2^{17} \approx 130,000$, $\Delta_n^{max}$ is about $0.0007\%, 0.05\%,$ and $1\%$.

\begin{figure}
\centering \includegraphics[width=\columnwidth]{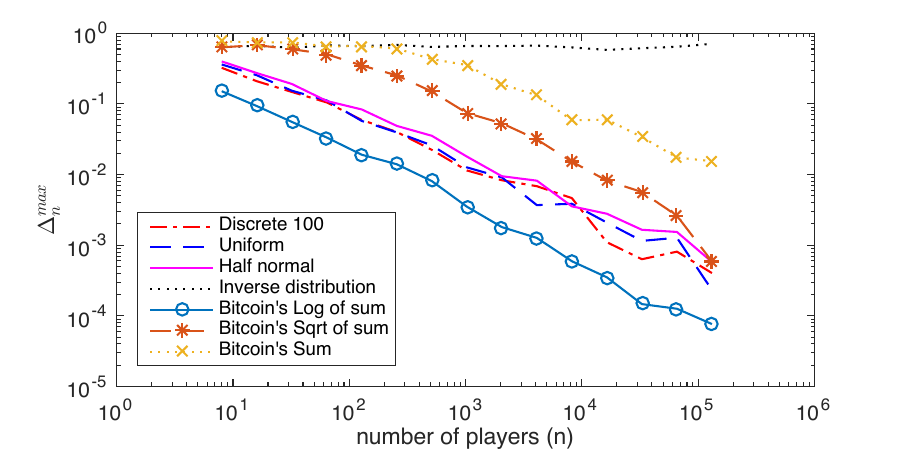}
\caption{The maximal discount ratio of a player from selfish bidding as a function of the number of players.}
\label{fig:delta_max}
\end{figure}

These findings show that there is a qualitative difference between $\Delta_n^{average}$ and $\Delta_n^{max}$, mainly for the inverse distribution and for the Bitcoin distribution with $v(x)=x$. In these distributions, although the average transaction will benefit very little from bid shading, some transactions can nonetheless obtain significant benefit. This outcome can, in principle, result from two different reasons. The first is that $\Delta_n^{average}$ also accounts  for the losing transactions, whose discount ratio is zero. The second is that $\Delta_n^{max}$ only accounts for the single winning user with the maximal discount ratio. However, typical user behavior (as demonstrated in Figure~\ref{fig:p_strategic}) shows that almost all winning bids have the lowest strategic price and, therefore, the highest discount ratio. Thus, the explanation  for the difference between the two discount ratios seems to be the first.

Figure~\ref{fig:log_bitcoin_scatter_plot} shows all simulation points for the Bitcoin log-of-sum distribution. The x-axis is the resulting {\em block size} of the simulation run (in the previous figures it was $n$) and the y-axis is the maximal discount ratio of the simulation run. Three main conclusions can be drawn from Figure~\ref{fig:log_bitcoin_scatter_plot}. First, block sizes (number of winners) range from 1 to about 25,000 when $v(x)= \log x$ (recall that the total number of bids ranges from $2^3$ to $2^{17}$).\footnote{The throughput when $v(x)= \log x$ will therefore be about $0.1n$ ($n$ is the number of all transactions). For $v(x)= x$, block sizes range from 1 to about 1,500, implying a throughput of about $0.01n$. This can be compared to Bitcoin's actual throughput in October 2016 which varied between 1300 and 1900 transactions per-block (the variation is mostly due to variations in transaction sizes in bytes).} Second, this range in block sizes is the main reason why $\Delta_n^{max}$ is smaller in the Bitcoin log-of-sum distribution, as larger block sizes imply smaller discount factors. Third, the distribution of the simulation points, which include outliers, is not normal. We do not have an explanation for this finding. The last two remarks apply to all tested distributions.

\begin{figure}
\centering \includegraphics[width=\columnwidth]{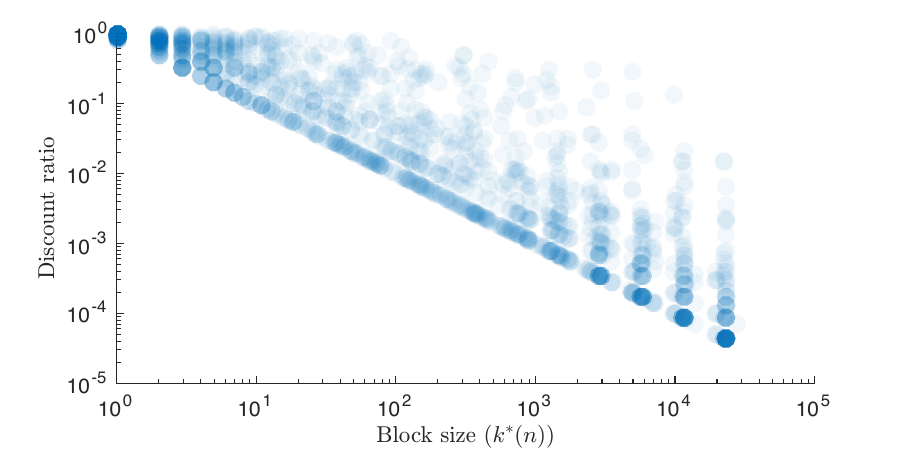}
\caption{A scatter plot of all simulation points for the Bitcoin distribution with $v(x) = \log x$.}
\label{fig:log_bitcoin_scatter_plot}
\end{figure}

\section{The Suitability of the RSOP Auction in Our Context}
\label{sec:RSOP}

The monopolistic-price auction is miner-honest in the sense that a myopic miner cannot increase her revenue by diverging from the suggested protocol (for example, add false bids or delete existing bids). However, as discussed above, even impatient users can sometimes benefit from submitting a strategic bid which is different than their true value.

In this section we discuss an alternative approach in which we use a well-known auction which is truthful for impatient users. That is, an impatient user will maximize her gain by simply submitting her value. On the other hand, in this second mechanism the challenge is to show that miners will be honest and will not manipulate the protocol, as we further discuss below. In particular, we will show empirical evidence suggesting that the profit for the miners from such manipulations becomes negligible as the number of users increases.

More specifically, our second suggested mechanism uses the following RSOP (Random Sampling Optimal Price) auction, defined by~\cite{goldberg2006competitive}. In the definition, the following notation is used: For a subset $A\subseteq [n]$ of users, and a bid vector $\bm b$, let $\bm b_A=(b_i)_{i \in A}$. We define $p^{M}_A(\bm b)\equiv p^{M}(\bm b_A)$. When $\bm b$ is clear from the context, we simply write $p^{M}_A$.

\begin{definition}[The \nom{RSOP}{RSOP}{The Random Sampling Optimal Price Auction, Definition~\ref{def:RSOP_auction}} auction~\cite{goldberg2006competitive}]
Upon receiving $n$ bids, $\bm b = b_1,....,b_n$, the miner performs the following:
\begin{enumerate}
\item Randomly partition the bids to two disjoint sets $A$ and $B$ (each bid is placed in $A$ with probability $\frac{1}{2}$, otherwise it is placed in $B$).
\item Compute the monopolistic price for each set: $p_A^{M}$, $p_B^{M}$. The monopolistic price of an empty set is defined to be zero.
\item The set of winning bids is $A' \cup B'$, where:
$$A' = \{i\in A : b_i\geq p^{M}_B\} \quad ; \quad B' = \{i\in B : b_i\geq p^{M}_A\}.$$ The transactions in $A'$ each pay $p^{M}_B$, and the transactions in $B'$ pay $p^{M}_A$ each.
The revenue obtained in the auction is therefore $$RSOP(\bm b)=|A'|\cdot p^{M}_B+ |B'|\cdot p^{M}_A.$$
\end{enumerate}
\label{def:RSOP_auction}
\end{definition}


\noindent
For the RSOP mechanism, Goldberg \etal~provide the following results that hold 
under ``the usual'' auction theory assumptions, most notably that: (1) the auctioneer is honest; (2) bidders do not collude; (3) each bidder can submit exactly one bid; (4) bidders are impatient, i.e. they do not obtain utility from winning in future auctions (we discuss these assumptions further below).

\begin{theorem}[\cite{goldberg2006competitive}, Observation 6.2 and Theorem 6.4]
~~
\begin{enumerate}
\item \emph{Truthfulness.} The RSOP auction is truthful, i.e., a bidder maximizes her utility by reporting her value, even when the other bidders do not reveal their value. \\

\item \emph{Maximal Revenue.} Fix any parameter $h$. Let $\bm b$ be any bid vector of $n$ bids with $b_i \in [1,h]$ for all $i$. Then $$\lim_{n\to \infty} \max_{\bm b} \frac{R(\bm b)}{RSOP(\bm b)}=1.$$
\end{enumerate}
  \label{thm:RSOP}
\end{theorem}

To instantiate the RSOP-based mechanism, users must specify a bid that ideally represents their values for each transaction, and the miners are asked to create a block with all transactions they wish to \emph{potentially} include. Unlike the current Bitcoin protocol, here not all transactions in a block are valid. After the block is mined, and propagated to the Bitcoin nodes in the network, they determine which transactions are valid, by running 
Algorithm~\ref{alg:RSOP_block_verification}.


\begin{algorithm}[H]
		\caption{RSOP block verification}\label{alg:genCompr}
		\begin{algorithmic}[1]
			\STATE A node receives a new block $B$.
            \STATE Check validity of the block and of all the included transactions according to Bitcoin's current rules. When referring to transactions in previous blocks, consider only valid transactions (as explained below).
            \STATE \label{line:PRNG} Compute the sets $A$ and $B$ using the block hash as a seed to a Cryptographically Secure Pseudo-Random Number Generator (CSPRNG). 
            \STATE The transactions that are considered valid are the ones in $A'$ and $B'$ as in Def.~\ref{def:RSOP_auction}.
            \STATE
            \label{step:alpha}
            Transactions in $A'$ pay $p_B^{M}$ and transactions in $B'$ pay $p_A^{M}$. A fraction $1-\alpha$ of that revenue goes to the miner who mined the current block, and the rest goes to the future miner who would mine the next valid block. Invalid transactions do not pay anything. The parameter $0\leq \alpha \leq 1$ needs to be specified as part of the protocol. 
		\end{algorithmic}
        \label{alg:RSOP_block_verification}
\end{algorithm}



The algorithm has the following advantages.  Since the source of the randomness (block hash) used in line~\ref{line:PRNG} is the same for all the Bitcoin nodes, all nodes will reach consensus on the set of valid transactions. Notice that the block hash is determined by its contents including the solution to a proof-of-work puzzle. A miner cannot easily manipulate the choice of the sets $A$ and $B$ to maximize revenue\footnote{A more careful cryptographic analysis is required to establish this statement, but is outside the scope of this work. This statement is fairly simple to show in the random oracle model.} (if she could, it could destroy the truthfulness of the protocol). She \emph{can} choose to forgo a block in which $A$ and $B$ were not set to her liking but will need to generate the proof-of-work from scratch, which is extremely costly.  There are however several concerns with the proposed algorithm:

\begin{trivlist}

\item \emph{User truthfulness.}
 The assumption that users cannot send multiple bids is unrealistic in the Bitcoin setting. Therefore, we do not know whether the suggested protocol is truthful even for impatient users and honest miners. 
 Under the assumption that the user cannot control which bids will be associated with subset $A$ and which with subset $B$, we do not know how to construct even tailored worst-case examples or if there exists an efficient algorithm to find beneficial deviations.\footnote{For the monopolistic auction, we discuss this issue in Sections~\ref{sec:p_multibid} and ~\ref{sec:empirical_evaluation}.}

\vspace{3mm}

\item \emph{Miner honesty: adding false bids.}
Miners are able to manipulate the protocol by adding false bids, especially if at step~\ref{step:alpha} of Algorithm~\ref{alg:genCompr}, $\alpha=0$. The purpose of choosing $\alpha>0$ is to mitigate this problem. The following example demonstrates this issue:
\begin{example}
Suppose $\alpha=0$. There are two users with $b_1 = h,\ b_2 = \ell$ where $h>2\ell$. With probability $\frac{1}{2}$ both users will fall to the same set, and the revenue for the miner would be $0$. If they fall into different sets, only the $h$ user will be included, and will pay $\ell$. Therefore, the expected revenue is $\frac{\ell}{2}$. A strategic miner can create many false transactions that bid $h$, by this receiving a revenue of approximately $h$. The false bids clearly cannot create a profit for the miner but also do not harm her because she pays them to herself. As $\alpha$ increases, the profitability of this strategy decreases.
\end{example}

More generally, a dishonest miner can always use the following simple strategy to obtain the revenue $R(b)$: (1) Compute the monopolistic price over all bids; (2) Add sufficiently many false transactions whose bids are equal to the monopolistic price. Adding sufficiently many such false bids will change the RSOP revenue to that of the monopolistic price mechanism.

We conjecture that the strategy in the preceding paragraph is always beneficial to the miner, when $\alpha=0$:
\begin{conjecture}
\label{conjecture:RSOP_revenue_smaller_than_monopolistic_revenue}
For every $\bm b$ and all choices of $A$ and $B$, the RSOP revenue is at most the monopolistic revenue. In particular, $RSOP(\bm b)\leq R(\bm b)$.
\end{conjecture}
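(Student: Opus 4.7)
The plan is to prove the deterministic statement $RSOP(\bm b) \le R(\bm b)$ for every fixed partition $(A,B)$; this is stronger than what the conjecture asks and bypasses any probabilistic averaging. The intuition is that RSOP sells to one half of the bidders at a price that is optimal for the \emph{other} half. Since each half's own monopolistic price maximizes its own revenue, swapping a ``foreign'' price back to a ``home'' price can only weakly increase the revenue on that half; after such a swap, both halves are priced identically, and the sum collapses into a single-price revenue over all of $\bm b$, which is by definition bounded by $R(\bm b)$.

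Concretely, I would first introduce the counting functions $n_A(p) = |\{i \in A : b_i \ge p\}|$, $n_B(p) = |\{i \in B : b_i \ge p\}|$, and $N(p) = n_A(p) + n_B(p)$, and rewrite
\[
RSOP(\bm b) = p^{monopolistic}_B \cdot n_A(p^{monopolistic}_B) + p^{monopolistic}_A \cdot n_B(p^{monopolistic}_A).
\]
Abbreviating $p_A = p^{monopolistic}_A$ and $p_B = p^{monopolistic}_B$, the optimality of $p_A$ on the set $A$ gives $p \cdot n_A(p) \le p_A \cdot n_A(p_A)$ for every price $p$; applying this with $p = p_B$ yields
\[
RSOP(\bm b) \le p_A \cdot n_A(p_A) + p_A \cdot n_B(p_A) = p_A \cdot N(p_A).
\]

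To finish, I would verify the essentially bookkeeping step that $R(\bm b) = \max_k k \cdot b_k$ dominates the single-price revenue $p \cdot N(p)$ at every real $p$: if $N(p) = k$ then the $k$-th highest bid satisfies $b_k \ge p$, so $p \cdot N(p) \le k \cdot b_k \le R(\bm b)$. Hence $p_A \cdot N(p_A) \le R(\bm b)$, completing the proof. By symmetry in $A$ and $B$, one in fact obtains the sharper bound
\[
RSOP(\bm b) \le \min\bigl\{p_A \cdot N(p_A),\ p_B \cdot N(p_B)\bigr\} \le R(\bm b).
\]

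There is no serious obstacle here; the only point requiring a moment of care is the last observation that $R(\bm b)$, which is a priori a maximum over the discrete index set $[n]$, nevertheless dominates the single-price revenue at arbitrary real prices $p$ that need not coincide with any submitted bid. Once that is noted, the chain is two inequalities long. As a bonus, the proof also quantifies how much RSOP gives up relative to the monopolist on any given partition, namely $R(\bm b) - \min\{p_A N(p_A), p_B N(p_B)\}$, which may be useful for later empirical or analytical statements about the size of the gap.
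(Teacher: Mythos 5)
Your proof is correct, and the most important thing to point out is that it does not merely take a different route from the paper --- the paper has no proof of this statement at all: it is posed as Conjecture~\ref{conjecture:RSOP_revenue_smaller_than_monopolistic_revenue} and supported only by the empirical evaluation and by the observed equivalence with Conjecture~\ref{con:RSOP_remove_bids_smaller_revenue}. Your two-inequality chain settles it, deterministically for every partition $(A,B)$, which is exactly the quantifier the conjecture asks for. The steps you leave implicit do hold: writing $n_A(p)=|\{i\in A: b_i\geq p\}|$ and abbreviating $p_A=p^{monopolistic}_A$, $p_B=p^{monopolistic}_B$, for any real $p$ with $n_A(p)=k\geq 1$ the $k$-th highest bid in $A$ is at least $p$, so $p\cdot n_A(p)\leq R(\bm b_A)=k^*(\bm b_A)\cdot p_A\leq n_A(p_A)\cdot p_A$ (you do not even need the equality $n_A(p_A)=k^*(\bm b_A)$, though it holds because $k^*$ is taken maximal); and the degenerate case of an empty half, whose monopolistic price the paper defines to be zero, only makes the corresponding term vanish. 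Hence
\[
RSOP(\bm b)=p_B\, n_A(p_B)+p_A\, n_B(p_A)\ \leq\ p_A\, n_A(p_A)+p_A\, n_B(p_A)\ =\ p_A\, N(p_A)\ \leq\ R(\bm b),
\]
where the last inequality is your ``bookkeeping'' observation applied to the full bid vector and the definition in Eq.~\eqref{eq:monopolistic_revenue}; by symmetry you also get the sharper $RSOP(\bm b)\leq\min\{p_A N(p_A),\,p_B N(p_B)\}$. Beyond resolving the conjecture, your bound feeds directly into the surrounding text: by the monotonicity argument already given there it yields Conjecture~\ref{con:RSOP_remove_bids_smaller_revenue} as well (for any $\bm b'\subseteq \bm b$, $RSOP(\bm b')\leq R(\bm b')\leq R(\bm b)$), and it turns the conditional statement that the miner's gain ratio from adding false bids or removing true bids vanishes for finite-support distributions (via Theorem~\ref{thm:RSOP}) into an unconditional one. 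The only presentational nit is that your phrase ``optimality of $p_A$ on the set $A$'' silently uses the same dominance-at-arbitrary-real-prices fact you only justify later for the full set; it is worth stating that small lemma once, up front, since both inequalities in the chain are instances of it.
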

The right hand side $(R(\bm b))$ is the revenue a manipulating miner can obtain with the strategy above, and the left hand side is the RSOP revenue, for \emph{any} allocation of the sets A and B (even one chosen adversarially). When $\alpha=0$, this is the revenue of the miner which uses Algorithm~\ref{alg:RSOP_block_verification} honestly, i.e., without false bids.

In the empirical evaluation reported below, we have never encountered a counterexample to Conjecture~\ref{conjecture:RSOP_revenue_smaller_than_monopolistic_revenue}. 
To measure the expected ``gain ratio'' from performing this strategy, we define
\[
\Delta^{RSOP}_n \equiv \E_{(v_1,\ldots,v_n)\sim F}\left[\frac{R(\bm v)}{RSOP(\bm v)}-1 \right].\]

\noindent
For distributions with bounded support sizes
Theorem~\ref{thm:RSOP} proves that \nom{deltarsop}{$\Delta^{RSOP}_n$}{$\Delta^{RSOP}_n \equiv \E_{(v_1,\ldots,v_n)\sim F}\left[\frac{R(\bm v)}{RSOP(\bm v)}-1 \right]$} goes to zero as $n$ goes to infinity. Furthermore, Figure~\ref{fig:Delta_RSOP} demonstrates that this indeed happens in a reasonably fast manner for almost all the distributions we considered in Section~\ref{sec:empirical_evaluation}, with the only exception being the inverse distribution.
\begin{figure}
\includegraphics[width=\textwidth]{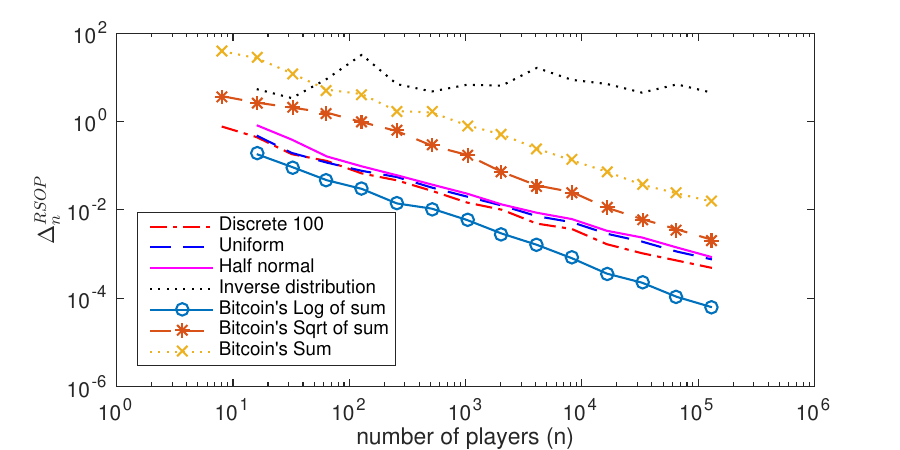}
\caption{$\Delta^{RSOP}_n$ as a function of the number of players, for various distributions.}
\label{fig:Delta_RSOP}
\end{figure}
To summarize this issue:

\begin{itemize}

\item {\bf Gaining from adding false bids when $\alpha=0$. } In the RSOP mechanism the miner can gain from adding false transactions with a bid equal to the monopolistic price. This is an easy strategy to perform, and as far as we can tell, it never harms the miner when $\alpha=0$. If the miner does so, the mechanism changes and becomes similar to the monopolistic-price mechanism. This, in turn, may harm the bidder-truthfulness property. 
However, the gain ratio from performing this strategy decreases as the number of users increases (with the exception of the Inverse distribution) -- see Figure ~\ref{fig:Delta_RSOP}.

\item {\bf Choosing $\alpha>0$ to eliminate the gain from adding false bids. } We conjecture that, for many bid distributions and when the number of users is fairly large, it is possible to set $\alpha > 0$ in a way that will eliminate the profitability of this strategy. Note that if $\alpha > 0$, the miner needs to pay some of her false bids to other miners.
On the other hand, setting $\alpha>0$, and especially high values (say, $\alpha=0.9$), introduces risks of side-payments. A miner may provide the following service. She asks users to submit transactions with 0 bid, and pay directly to the miner a fixed per transaction fee. The miner will guarantee that she will eventually include these transactions in a block which contain only other 0 bid transactions (which will then all be valid). In such a case the miner gets all the revenue for herself, whereas if she follows the protocol she will only receive a fraction $1-\alpha$ of the payments. If $\alpha$ is low, this is less likely to happen, as this requires non-trivial coordination between the miner and the user (e.g., trust regarding the transfer of payments, regarding inclusion in the block, etc).
\end{itemize}

\item \emph{Miner honesty: removing true bids.} The miner might also be able to increase her profit by removing true bids. The following example demonstrates this:
\begin{example}
\label{ex:2n1n}
There are $n$ bids with value $2$ and $n$ bids with value $1$. The miner can obtain a revenue of $2n$ w.h.p. by removing all bids of value $1$. The expected revenue for the honest miner is always at most $2n$, and in the limit of large $n$, the revenue is $2n-\sqrt{ \frac{n}{\pi} }+ o(\sqrt{n})$, see Appendix~\ref{sec:analysis_RSOP_example} Cor.~\ref{cor:RSOP_example}.
\end{example}
Note that the maximal revenue that can be obtained in this example by removing bids is exactly the monopolistic revenue. We do not know if this is true in general, however we strongly suspect that this is the case. More specifically, we conjecture:

\begin{conjecture}
For any vector of bids $\bm b$,
$\max_{\bm b' \subseteq \bm b} RSOP(\bm b') \leq R(\bm b)$.
\label{con:RSOP_remove_bids_smaller_revenue}
\end{conjecture}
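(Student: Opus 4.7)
The plan is to obtain Conjecture~\ref{con:RSOP_remove_bids_smaller_revenue} as a direct corollary of the earlier Conjecture~\ref{conjecture:RSOP_revenue_smaller_than_monopolistic_revenue} together with a simple monotonicity property of the monopolistic revenue under submultiset inclusion. Modulo that earlier conjecture, essentially all that remains is bookkeeping; conversely, the genuine difficulty of the statement lies entirely in proving the single-vector inequality $RSOP(\bm b) \leq R(\bm b)$.

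First, I would establish the following monotonicity lemma: for any submultiset $\bm b' \subseteq \bm b$, $R(\bm b') \leq R(\bm b)$. Sort both vectors in decreasing order and let $k'$ attain the maximum for $\bm b'$, so that $R(\bm b') = k' \cdot b'_{k'}$. The top $k'$ entries of $\bm b'$ are all at least $b'_{k'}$ and all lie in $\bm b$; hence $\bm b$ contains at least $k'$ entries of value $\geq b'_{k'}$, which forces $b_{k'} \geq b'_{k'}$ once $\bm b$ is sorted decreasingly. Therefore $R(\bm b) \geq k' \cdot b_{k'} \geq k' \cdot b'_{k'} = R(\bm b')$. Combining this with Conjecture~\ref{conjecture:RSOP_revenue_smaller_than_monopolistic_revenue} applied to $\bm b'$ yields $RSOP(\bm b') \leq R(\bm b') \leq R(\bm b)$ for every $\bm b' \subseteq \bm b$, and taking the maximum over such $\bm b'$ on the left-hand side gives exactly the desired bound.

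The hard part will therefore be the earlier conjecture: showing $RSOP(\bm b) \leq R(\bm b)$ for every fixed bid vector $\bm b$ and every realization of the random partition $(A,B)$. A natural attempt would split into cases according to which of $p_A^{monopolistic}$ and $p_B^{monopolistic}$ is smaller, and try to re-express the RSOP revenue as a uniform-price revenue on $\bm b$ evaluated at the smaller of the two prices, which would then be bounded above by $R(\bm b)$. The delicate obstacle is that the two RSOP prices may differ both from one another and from the global monopolistic price $p^{monopolistic}(\bm b)$, so the RSOP allocation cannot be directly reinterpreted as a uniform-price allocation on the whole bid vector; one must carefully account for the distinct payments made by the winners in $A'$ and $B'$ and argue combinatorially that their combined revenue never beats the best single-price revenue on $\bm b$. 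Example~\ref{ex:2n1n} suggests that any such argument must also track second-order effects, since the gap between the honest RSOP revenue and $R(\bm b)$ can be as small as $\Theta(\sqrt{n})$, so the proof cannot afford to lose more than lower-order terms in any step of the reduction.
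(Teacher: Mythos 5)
Your reduction is exactly the paper's own treatment: the paper states that Conjecture~\ref{con:RSOP_remove_bids_smaller_revenue} follows from Conjecture~\ref{conjecture:RSOP_revenue_smaller_than_monopolistic_revenue} via $RSOP(\bm b')\leq R(\bm b')\leq R(\bm b)$, invoking the monotonicity of $R(\cdot)$ that you spell out correctly with the sorting argument. Note that the single-vector inequality $RSOP(\bm b)\leq R(\bm b)$ is itself left as an open conjecture in the paper (supported only empirically), so, as in the paper, your argument is a conditional reduction rather than a complete proof --- which is the right and expected status for this statement.
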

In fact, Conjectures~\ref{conjecture:RSOP_revenue_smaller_than_monopolistic_revenue} and~\ref{con:RSOP_remove_bids_smaller_revenue} are equivalent: by choosing $\bm b'=\bm b$, Conjecture~\ref{con:RSOP_remove_bids_smaller_revenue} implies ~\ref{conjecture:RSOP_revenue_smaller_than_monopolistic_revenue}. To show the other direction, Conjecture~\ref{conjecture:RSOP_revenue_smaller_than_monopolistic_revenue} implies that $RSOP(\bm b')\leq R(\bm b')$ and by the monotonicity of $R(\cdot)$, we reach $RSOP(\bm b')\leq R(\bm b')\leq R(\bm b)$, as required.  

Note that under this conjecture, the gain ratio $\Delta^{RSOP}_n$ goes to zero for any distribution with finite support even if the miner can remove true bids, using Theorem~\ref{thm:RSOP}. Note that this form of the conjecture is harder to verify empirically as it requires considering different subsets for removal. But the equivalence to Conjecture~\ref{conjecture:RSOP_revenue_smaller_than_monopolistic_revenue}, and the empirical evaluation performed above provides some supporting evidence for it. 

We do not know an efficient algorithm to find the best bids to remove, and therefore the strategy for the miner is left open. We do not have any suggestions regarding how to mitigate this case. 
Unlike the previous case (adding false bids) $\alpha$ is irrelevant to this strategy by the miner. Perhaps it can be shown that removing true bids is rarely beneficial, under some distributional assumptions.

\item \emph{Block size.}
Notice that in the RSOP mechanisms blocks may contain many transactions that do not get accepted (their bid may be below the price that is eventually determined). One possible way to prevent this is to commit to all transactions in the Merkle tree in some canonical order of bid size and then eventually only reveal transactions that are needed to establish the monopolistic price and transactions that win this bid. While we do not have a fully fleshed out scheme, we do believe that clever use of data structures and cryptographic schemes may help in reducing the amount of wasted space in blocks in this way. 
\end{trivlist}

\section{Discussion}
\label{sec:summary}

The simple pay-your-bid fee mechanism used in Bitcoin has several disadvantages:
(i) A fixed (hard-coded) maximal block size implies non-optimal revenue extraction when the pre-fixed block-size is too small or too large for the current bid instance, (ii) Bitcoin wallets must invest computational effort in bid shading, and (iii) deciding on the maximal block size involves economic rather than purely technological considerations  (e.g., block propagation time~\cite{decker13information,Sompolinsky15Secure}). For single block creation, the proposed monopolistic auction effectively solves all these issues while maintaining miner non-manipulability.




The monopolistic auction is $\epsilon$-truthful ($\epsilon$ goes to zero as the number of users grows). Alternatively, we can take a truthful auction where a user cannot influence its winning price, e.g., the RSOP auction is truthful and its revenue $RSOP(\bm v)$ satisfies $\lim_{n\to \infty} \max_{\bm v} \frac{R(\bm v)}{RSOP(\bm v)}=1$~\cite{goldberg2006competitive}. Section~\ref{sec:RSOP} describes an implementation in Bitcoin (e.g., the required randomness needs to be verified by other miners) and discusses major drawbacks of it: First, to verify correctness, the block must contain all rejected transactions, creating significantly larger blocks. Second, truthfulness does not rule out beneficial bid splits. We do not know whether RSOP is resilient to such a manipulation. We discussed the effect of bid splits in the monopolistic auction in Section~\ref{sec:p_multibid}. Third, although
$\frac{R(\bm v)}{RSOP(\bm v)} \approx 1$, nonetheless $R(\bm v) - RSOP(\bm v)$ still usually grows to infinity (e.g., if $R(\bm v) = O(n + \sqrt{n})$ and $RSOP(\bm v) = O(n)$). Thus, a miner can significantly increase its revenue by engaging in manipulative behavior, for example, (i) deletion of true bids (however, we do not know if there exists a polynomial-time algorithm to find which bids to delete), (ii) adding many false bids, all of which are equal to the monopolistic price of the true bids (a possible solution to this attack is to split the block fees between the miner that created the block and the miner that will create the next block). 
In light of these shortcomings,  we believe that the monopolistic auction is better suited to implement improvements in Bitcoin's fee market.

\subsection{Temporal considerations}
\label{sec:temporal_considerations}
An important issue we did not cover is how the current mechanism would behave in a realistic setting where the interaction with miners is repeated. 
Just as in other mechanisms (for example, the second price auction) moving from a single shot interaction to a repeated one does not necessarily maintain the mechanism's truthfulness. 
In our case, one could consider a repeated setting, in which users who did not manage to have their transactions included in a block still persist and request a transfer from the miners again (indeed, by default, Bitcoin transactions persist in the miner's queue if they were not already included in a block). In this setting, the monopolistic mechanism, which clears the queue of its highest value transactions will slowly cause the distribution of transactions in the queue to be more skewed towards lower-bid transactions compared to the distribution of new incoming transactions. We then expect that occasionally the mechanism will cause a price fluctuation, when the weight of lower-bid transactions shifts the monopolistic price down. In this case, the occasional drop in price may entice patient users (who we did not consider in our model) to bid lower values and have their transactions accepted whenever such a drop occurs. The mechanism is therefore not truthful if applied in this manner. 

Another important aspect to consider if patient users are present is the effect on the strategic behavior of miners. We have argued that in the one-shot monopolistic auction, miners have no incentive to add false bids or delete true ones. We believe this will be the case for small miners even when patient users exist, since small miners cannot anticipate to mine blocks often. However, with large miners (or mining pools), a different analysis will be required. Consider an example where at every round, 51\% or new transactions are of value 1, and 49\% of transactions are of value 2. In this case, a myopic miner may include all transactions in his block with a monopolistic price of 1. A more far-sighted miner that has almost 100\% of the hashing power\footnote{Of course, such a miner could double-spend transactions, censor users, etc., so, in some sense, the manipulation which we discuss is the least of our worries.} may instead alternate between blocks that contain only transactions with value 1 and blocks that only contain transactions with value 2, relying on the ability to defer transactions by one block, and gaining in the process.


On the face of it, a simple posted-price technique could handle such temporal considerations. Specifically, gather statistics about the underlying distribution and calculate an optimal fixed price for transactions to be included in the block. However, since the true underlying distribution is really unknown, such a method is not straight-forward and contains several significant challenges. First, it is hard to learn the full distribution, since blocks only contain information about accepted transactions, and therefore there is no consensus regarding the tail of the bidding distribution. Second, the distribution itself may change as a result of the mechanism's own behavior if transactions that are not included in blocks persist and attempt to enter future blocks. Third, such a statistical method is highly sensitive to miner manipulations as they can easily alter sampled transactions and their transaction fees. One recent quite sophisticated attempt to handle these issues is contained in Ethereum's recent proposal EIP1559~\cite{buterin2019eip} which is examined thoroughly in~\cite{rougharden2020transaction} (which also assumes that users are myopic with regards to their transactions, as we do) and in a related analysis in~\cite{ferreira2021dynamic}. A key idea used within EIP1559 proposes burning money to avoid the miners manipulating the posted amount that transactions are charged. The analysis in \cite{ferreira2021dynamic} further shows that EIP1559 is not always stable and a different posted price mechanism is proposed. All this demonstrates that successfully handling such issues even in a myopic setting and we believe that this is an open problem which deserves and requires significant further effort. 

\subsection{Future Work}
The monopolistic auction can also be explored further via a careful equilibrium analysis. The astute reader may notice that the current model allows for an equilibrium in which all users bid 0. This is an unrealistic setting since in practice there exists a maximum limit on the block size, and not all transactions can be included in the block. 
Still, an elaborate game-theoretic analysis of all possible equilibria in our model is interesting.

One of the modeling uncertainties in this work regards the users' values. We chose to analyze various distributions (see, e.g., Figure~\ref{fig:delta_max}) since we have little information regarding the maximal willingness to pay of the users. This issue has been studied extensively in several economic contexts~\cite{nussair04revealing,lange2002impact,wicker2011willingness} and we believe that studying it in Bitcoin will be beneficial.

The work in~\cite{tsabary18gap} shows that, given Bitcoin's current reward scheme, it is beneficial to miners to introduce a time gap between the time a miner observes the last block and the time the miner starts mining the new block. It could be interesting to examine whether our proposed new reward scheme eliminates this issue as well.

\section{Acknowledgements} 
A preliminary extended abstract of this paper appeared in~\cite{lavi19redesigning}.
We thank Alaa Mozalbat, who designed and implemented the simulation code, and Ittay Eyal and Reshef Meir for valuable discussions. R.L.~was partly supported by a Marie-Curie fellowship ``Advance-AGT'', by an ARCHES award from the MINERVA foundation, and by the ISF-NSFC joint research program (grant No.~2560/17). O.S.~was supported by ERC Grant 280157, by the Israel Science Foundation (personal grant 682/18 and Quantum Technologies and Science Program grant 2137/19), and by the Cyber Security Research Center at Ben-Gurion University. 
 A.Z.~was supported by the Israel Science Foundation (grant 616/13) and by a grant from the HUJI Cyber Security Research Center in conjunction with the Israel National Cyber Bureau (grant  039-9230).

\bibliographystyle{plain}
\bibliography{An_auction_approach_to_the_Bitcoin_block_size_challenge}

\appendix

\section{The Definition of Strategic Price is Well Formed}
\label{ap:sp_well_formed}

The goal of this section is to show that the infimum in Eq.~\eqref{eq:ps_well_defined} is equal to the minimum. This is shown in Claim~\ref{cl:f_property} below, using $f(b_i) = p^{M}(b_i,b_{-i})$. Claim~\ref{cl:p_monopolistic_satisfies_the_f_property} shows that $f(b_i)$ satisfies the property required by Claim~\ref{cl:f_property}, and clearly, there exists $b_i$ such that $b_i \geq f(b_i)$, e.g., taking any $b_i > \max_{j \neq i} b_j$.

\begin{mclaim}
Let $f:\Re_{\geq 0} \rightarrow \Re_{\geq 0}$ be any function that satisfies the following property: If $x < f(x)$ there exists $\delta >0$ such that for any $0 < \epsilon < \delta$, $f(x + \epsilon) = f(x)$. Let $A = \{ x \in \Re_{\geq 0}\ |\ x \geq f(x)\}$.
Then, if $A$ is not empty, $\inf A \in A$.
\label{cl:f_property}
\end{mclaim}
\begin{proof}
Note that if $A$ is not empty than the infimum of $A$ is well defined since $A$ is bounded from below by $0$. Let $x^* = \inf A$. Assume towards a contradiction that $x^* \notin A$, i.e., $x^* < f(x^*)$. Thus, by the property of $f$, there exists $\epsilon > 0$ such that $x^* + \epsilon < f(x^*)$ and $f(x) = f(x^*)$ for every $x^* < x < x^* + \epsilon$. Since $x^* = \inf A$ there exists $x \in A$, $x^* < x < x^* + \epsilon$. For this $x$, $f(x) \leq x < x^* + \epsilon < f(x^*) = f(x)$, a contradiction.
\end{proof}

\begin{mclaim}
Fix any $\bm b_{-i}$ and any $b_i < p^{M}(b_i, \bm b_{-i})$.
Then there exists $\delta >0$ such that for any $0 < \epsilon < \delta$, $p^{M}(b_i + \epsilon, \bm b_{-i}) =
p^{M}(b, \bm b_{-i})$.
\label{cl:p_monopolistic_satisfies_the_f_property}
\end{mclaim}
\begin{proof}
Let $\bm v = (b_i, \bm b_{-i})$ and assume w.l.o.g.~that $v$ is ordered, i.e., $v_1 \geq v_2 \geq ... \geq v_n$. Let $k$ be such that
$v_k = p^{M}(\bm v)$. Let $j$ be the minimal index of bid $b_i$ in $\bm v$, i.e., $v_{j-1} > v_j = b_i$. Note that $j > k$ since $v_j < v_k$ by assumption. Also note that $v_j \cdot j < v_k \cdot k$ since the monopolistic price is $v_k$. Choose $\delta > 0$ such that $v_j + \delta < v_{j-1}$ and $(v_j + \delta) \cdot j < v_k \cdot k$. By definition, for any $0 < \epsilon < \delta$, $p^{M}(v_j + \epsilon, \bm v_{-j}) = p^{M}(\bm v)$, and the claim follows.
\end{proof}

\section{Auxiliary Technical Claims}
\label{sec:auxiliary_claims}

This section proves multiple useful properties of the monopolistic auction. We recommend that this section be read linearly. Recall that $\mbox{num}(\bm v, z)\equiv |\{v_i|v_i\geq z \}|$.

\begin{mclaim}
If $p^{M}(\bm v_{-i}) \leq v_i$, then $p^{M}(\bm v_{-i}) \leq p^{M}(\bm v)$.
Furthermore, if $p^{M}(\bm v_{-i}) = v_i$, then $p^{M}(\bm v_{-i}) = p^{M}(\bm v)$.
\label{cl:monotonicity_monopolistic_price}
\end{mclaim}
\begin{proof}
Let $x = p^{M}(\bm v_{-i})$ and
$y = p^{M}(\bm v)$. Assume by contradiction that $x>y$.
By definition of the monopolistic price for $\bm v_{-i}$,
$x \cdot \mbox{num}(\bm v_{-i}, x) > y \cdot \mbox{num}(\bm v_{-i}, y)$.
Since $v_i \geq x > y$, $\mbox{num}(\bm v, x) =
\mbox{num}(\bm v_{-i}, x) + 1$ and $\mbox{num}(\bm v, y) =
\mbox{num}(\bm v_{-i}, y) + 1$. Thus,
$x \cdot \mbox{num}(\bm v, x) >
y \cdot \mbox{num}(\bm v, y) - y + x > y \cdot \mbox{num}(\bm v, y)$,
contradicting $y = p^{M}(\bm v)$.

For the second part of the claim, note that the monopolistic price cannot increase when adding $v_i$, otherwise $v_i$ will not be included in the optimal block and therefore it cannot make a difference. The first part of this claim shows that the monopolistic price cannot decrease, hence the equality.
\end{proof}

\begin{mclaim}
For any $\bm v$ and any $i$,
$p^{S}(\bm v_{-i}) < p^{M}(\bm v_{-i})$.
\label{cl:p_strategic_sl_monopolistic}
\end{mclaim}
\begin{proof}
Let $k^* = k^*(\bm v_{-i})$,
and $b^* = \frac{k^*}{k^*+1}p^{M}(\bm v_{-i})$.
Suppose by contradiction that $p^{M}(b^*, \bm v_{-i}) > b^*$. This would imply that $p^{M}(b^*, \bm v_{-i}) = p^{M}(\bm v_{-i})$.
But this is a contradiction since
$\mbox{num}((b^*,v_{-i}),b^*) \cdot b^* \geq (k^* + 1) \cdot b^* =
k^* \cdot p^{M}(\bm v_{-i})$. Thus,
$p^{M}(b^*, \bm v_{-i}) \leq b^*$.
Therefore, $p^{S}(\bm v_{-i})=\min_b p^{M}(b,\bm v_{-i})\leq p^{M}(b^*,\bm v_{-i})\leq b^* < p^M(\bm v_{-i})$, which concludes the proof.
\end{proof}

\begin{cor}
Fix $v_1,...,v_n$ and let $i^* \equiv \mbox{argmax}_{i=1,...,n} v_i$. Then,
\[
p^{S}(\bm v_{-i^*}) < p^{M}(\bm v_{-i^*}) \leq p^{M}(\bm v).
\]
\label{cor:p_strategic_leq_monopolistic}
\end{cor}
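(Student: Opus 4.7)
The plan is to obtain the corollary as a direct consequence of Claims~\ref{cl:monotonicity_monopolistic_price} and~\ref{cl:p_strategic_sl_monopolistic} applied to the specific index $i^*$, with essentially no extra work beyond verifying the hypothesis of the monotonicity claim.

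For the left inequality $p^{strategic}(\bm v_{-i^*}) < p^{monopolistic}(\bm v_{-i^*})$, I would simply invoke Claim~\ref{cl:p_strategic_sl_monopolistic} with $i = i^*$. That claim already gives a strict inequality for every $i$, so specializing to $i = i^*$ is immediate and requires no additional argument.

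For the right inequality $p^{monopolistic}(\bm v_{-i^*}) \leq p^{monopolistic}(\bm v)$, I would apply Claim~\ref{cl:monotonicity_monopolistic_price} with $i = i^*$. The only thing to check is its hypothesis, namely that $p^{monopolistic}(\bm v_{-i^*}) \leq v_{i^*}$. This follows because the monopolistic price on any bid vector equals some entry of that vector (by definition it equals $b_{k^*}$ for the sorted bids), and since $i^* = \operatorname{argmax}_i v_i$, the value $v_{i^*}$ is an upper bound on every entry of $\bm v_{-i^*}$, and in particular on $p^{monopolistic}(\bm v_{-i^*})$. Once this hypothesis is in hand, Claim~\ref{cl:monotonicity_monopolistic_price} delivers exactly the desired inequality.

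There is no real obstacle in this proof: the entire content is just choosing $i = i^*$ in two already-proved claims and observing that the maximum value trivially dominates the monopolistic price on the remaining bids. The write-up should therefore be only a few lines, essentially a two-sentence chain of implications.
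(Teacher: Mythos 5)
Your proposal is correct and follows exactly the paper's route: the paper also derives the corollary as an immediate consequence of Claims~\ref{cl:p_strategic_sl_monopolistic} and~\ref{cl:monotonicity_monopolistic_price}, and your verification that $p^{monopolistic}(\bm v_{-i^*})$ equals some entry of $\bm v_{-i^*}$ and is hence at most $v_{i^*}$ is precisely the (implicit) hypothesis check needed there.
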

This is an immediate consequence of Claims~\ref{cl:p_strategic_sl_monopolistic} and~\ref{cl:monotonicity_monopolistic_price}.

\begin{mclaim}
For any $\bm v$ and any $i$, let
$x = p^{M}(\bm v)$.
Then, $p^{M}(x, \bm v_{-i}) \leq x$
\label{cl:help_help}
\end{mclaim}
\begin{proof}
Suppose towards a contradiction that
$y = p^{M}(x, \bm v_{-i}) > x$.
Therefore $y \cdot \mbox{num}((x, \bm v_{-i}), y) >
x \cdot \mbox{num}((x, \bm v_{-i}), x)$.
Note that
$\mbox{num}(\bm v, y) \geq \mbox{num}((x, \bm v_{-i}), y)$
and
$\mbox{num}(\bm v, x) \leq \mbox{num}((x, \bm v_{-i}), x)$.
This implies\\ $y \cdot \mbox{num}(\bm v, y) \geq
y \cdot \mbox{num}((x, \bm v_{-i}), y) >
x \cdot \mbox{num}((x, \bm v_{-i}), x) \geq
x \cdot \mbox{num}(\bm v, x)$.
Therefore $y \cdot \mbox{num}(\bm v, y) > x \cdot \mbox{num}(\bm v, x)$
which contradicts the fact that $x = p^{M}(\bm v)$.
\end{proof}

\pstrategicalternativedef*
\begin{proof}
Recall the definition, $p^{S}(\bm v_{-i}) \equiv
\min \{ b\ |\ p^{M}(b,\bm v_{-i}) \leq b\}$.
It immediately follows that
$p^{M}(p^{S}(\bm v_{-i}),\bm v_{-i}) \leq
p^{S}(\bm v_{-i})$. Assume towards a contradiction that the inequality is strict, and let $x=p^{M}(p^{S}(\bm v_{-i}),\bm v_{-i})$. But then Claim~\ref{cl:help_help} implies that $p^{M}(x,\bm v_{-i}) \leq x$, which is a contradiction since $p^{S}(\bm v_{-i})$ is supposed to be the minimal such $x$.
\end{proof}

Fixing any $\bm v$, recall that $i^* \equiv \mbox{argmax}_{i=1,...,n} v_i$, $k^* \equiv k^*(\bm v_{-i^*}), p^S \equiv p^{S}(\bm v_{-i^*})$.

\xequaly*
\begin{proof}
By Corollary~\ref{cor:p_strategic_leq_monopolistic}, 
\begin{equation}
p^{S}<  p^{M}(\bm v_{-i^*}) \leq p^{M}(v_{i^*}, \bm v_{-i^*}) = p^{M}(\bm v).
\label{eq:p_strategic_leq_p_honest}
\end{equation}
I.e., $y \leq p^{M}(\bm v_{-i^*}) \leq p^{M}(\bm v)$ (because
$p^{M}(\bm v_{-i^*})$ is in the support of $F$). Thus, since $p^{M}(\bm v)=y$, 
\begin{equation}
 p^{M}(\bm v_{-i^*}) = p^{M}(\bm v).
 \label{eq:p_mon_eq_p_honest}
\end{equation}

By definition of $k^*$, there exist $k^*$ values in $\bm v_{-i^*}$ that are at least $p^{M}(\bm v_{-i^*})$. Overall, 
\begin{equation}
\mbox{num}(\bm v_{-i^*},p^{S}) =
\mbox{num}(\bm v_{-i^*},y) = \mbox{num}(\bm v_{-i^*},p^{M}(\bm v)) = k^*.
\label{eq:num_p_strategic}
\end{equation}

Therefore,
$R(\bm v_{-i^*},p^{S}) = (k^*+1)p^{S}$: this follows from  Property~\ref{cl:p_strategic_alternative_def}, and Eq.~\eqref{eq:num_p_strategic}. 
Thus, for any number $z$,
\begin{equation}
    (k^*+1)p^{S}
\geq
z \cdot \mbox{num}((\bm v_{-i^*}, p^S), z) \geq
z \cdot \mbox{num}(\bm v_{-i^*}, z).
\label{eq:k_star_plus_1_times_ps}
\end{equation}

Take $z\equiv p^{M}(\bm v)$. 
Note that $\mbox{num}(\bm v_{-i^*}, p^{M}(\bm v)) = k^*$ using Eq.~\eqref{eq:p_mon_eq_p_honest}. Using Eq.~\eqref{eq:k_star_plus_1_times_ps},
$(k^*+1)p^{S} \geq  p^{M}(\bm v) \cdot k^* $, and the claim follows.
\end{proof}

\begin{mclaim}
For any $\bm v = (v_1,...,v_n)$, and any $v'_i$ such that
$v_i \geq v'_i \geq p^{M}(\bm v)$,
$p^{M}(\bm v) \geq p^{M}(v'_i, \bm v_{-i})$.
\label{cl:p_monopolistic_is_monotone}
\end{mclaim}
We remark that $p^{M}(\bm v)$ is not necessarily monotonically increasing as a function of $v_i$, for example, $p^{M}(2,0)=2>1=p^{M}(2,1)$. Hence, the importance of the condition $v'_i \geq p^{M}(\bm v)$.
\begin{proof}
Denote $x = p^{M}(\bm v)$ and
$y = p^{M}(v'_i, \bm v_{-i})$.
Assume towards a contradiction that $x <  y$.
Therefore
\begin{align*}
y \cdot \mbox{num}(\bm v, y) \geq
y \cdot \mbox{num}((v'_i, \bm v_{-i}), y) \\
> x \cdot \mbox{num}((v'_i,\bm v_{-i}), x) =
x \cdot \mbox{num}(\bm v, x)
\end{align*}
where the first step follows since $v_i \geq v'_i$,
the second step follows since $y$ is the monopolistic
price for $(v'_i, \bm v_{-i})$ (the inequality is strict because $y>x$), and the third step follows since $v_i \geq v'_i \geq x$. However
$y \cdot \mbox{num}(\bm v, y) > x \cdot \mbox{num}(\bm v, x)$ contradicts
the fact that $x$ is the monopolistic price for $v$.
\end{proof}

\begin{mclaim}
For any $\bm v = (v_1,...,v_n)$, and any $v'_i$ such that
$v_i \geq v'_i \geq p^{M}(v'_i, \bm v_{-i})$,
$p^{M}(\bm v) \geq p^{M}(v'_i, \bm v_{-i})$.
\label{cl:p_mon_is_monotone_1}
\end{mclaim}
\begin{proof}
Assume towards a contradiction that $p^{M}(\bm v) < p^{M}(v'_i, \bm v_{-i})$. But then
$v_i \geq v'_i \geq p^{M}(\bm v)$ and
Claim~\ref{cl:p_monopolistic_is_monotone} implies that
$p^{M}(\bm v) \geq p^{M}(v'_i, \bm v_{-i})$ which is a contradiction.
\end{proof}

\begin{mclaim}
For any $\bm v = (v_1,...,v_n)$, and any $i$, if $v_i < p^{M}(\bm v)$ then $v_i < p^{S}(\bm v_{-i})$.
\label{cl:loser_cannot_gain}
\end{mclaim}
\begin{proof}
Denote $x \equiv p^{M}(\bm v)$ and
$y \equiv p^{S}(\bm v_{-i}) = p^{M}(y, \bm v_{-i})$ (according to Property~\ref{cl:p_strategic_alternative_def}). Assume towards a contradiction that $x > v_i \geq y$. 
Since $x \equiv p^{M}(\bm v)$ and our assumption $y<x$,
\begin{equation}
x \cdot \mbox{num}(\bm v, x) > y \cdot \mbox{num}(\bm v, y)    
\label{eq:xnumx_gr_ynumy}
\end{equation} 
Furthermore, since $y = p^{M}(y, \bm v_{-i})$),
\begin{equation}
y \cdot \mbox{num}((y, \bm v_{-i}), y) \geq x \cdot \mbox{num}((y, \bm v_{-i}), x).  
\label{eq:y_nym_v_min_i_y_geq}
\end{equation}
 
We also have 
\begin{equation}
\mbox{num}(\bm v, y) = \mbox{num}((y, \bm v_{-i}), y), 
\label{eq:num_v_y_eq_num_v_min_i_y}
\end{equation}
since $v_i \geq y$, and similarly,
\begin{equation}
 \mbox{num}(\bm v, x)= \mbox{num}((y, \bm v_{-i}), x) ,
 \label{eq:num_v_x_eq_num_y_v_min_i_x}
\end{equation}
since $v_i < x$ and $y<x$.

By plugging in Eqs.~\eqref{eq:num_v_y_eq_num_v_min_i_y} and~\eqref{eq:num_v_x_eq_num_y_v_min_i_x} in Eq.~\eqref{eq:y_nym_v_min_i_y_geq},
\begin{equation}
   y \cdot \mbox{num}(\bm v, y) \geq x \cdot \mbox{num}(\bm v, x). 
   \label{eq:y_num_v_y_geq_x_num_v_x}
\end{equation}
Combining Eqs.~\eqref{eq:xnumx_gr_ynumy} and \eqref{eq:y_num_v_y_geq_x_num_v_x}, we finally show that $x \cdot \mbox{num}(\bm v, x) > x \cdot \mbox{num}(\bm v, x)$, which proves the  contradiction.
\end{proof}

\begin{mclaim}
\label{cl:mon_for_win_st_p}
For any $v_1,...,v_n$ and $i,j$,
$v_i > v_j \geq p^{S}(\bm v_{-j})$ implies that
$p^{S}(\bm v_{-j}) \geq p^{S}(\bm v_{-i})$.
\end{mclaim}
We remark that the condition $v_j \geq p^{S}(\bm v_{-j})$
is important as without it the claim is not true. For example,
take $\bm v = (2,1,0)$:
\[ p^{S}(\bm v_{-2}) = p^{S}(2,0) = 1 > \frac{2}{3} = p^{S}(2,1)=p^{S}(\bm v_{-3}).\]
\begin{proof}
Let $b = p^{S}(\bm v_{-j})$.
We will prove that $b \geq p^{M}(b,\bm v_{-i})$, which
implies the claim since $p^{S}(\bm v_{-i}) \equiv
\min \{b\ |\ b \geq p^{M}(b,\bm v_{-i})\}$.

By applying Claim~\ref{cl:p_monopolistic_is_monotone} with respect to $\bm u=(v_1,\ldots,v_{j-1},b,v_{j+1},\ldots,v_n)$ and
\[\bm u'=(v_1,\ldots,v_{i-1},v_j,v_{i+1},\ldots, v_{j-1},b,v_{j+1},\ldots,v_n)\]
implies that
$p^{M}(\bm u) \geq
p^{M}(\bm u')$
(note that $v_i \geq v_j \geq b \geq p^{M}(b,\bm v_{-j})=p^{M}(\bm u)$ where the last inequality follows from the definition of $p^{S}(\bm v_{-j})$; therefore $u_i \geq u'_i \geq p^{M}(\bm u)$). Thus
\[
b \geq p^{M}(\bm u) \geq p^{M}(\bm u') = p^{M}(b, \bm v_{-i}).
\]
The last equality holds since the monopolistic price is oblivious to the identities of the users. The claim thus follows.
\end{proof}

\deltabiggerforlargerbidders*
\begin{proof}
If $v_j < p^{S}(\bm v_{-j})$, $\delta_j(v_j,\bm v_{-j})=0$ and the claim follows. Thus assume $v_j \geq p^{S}(\bm v_{-j})$, and we have $v_i > v_j \geq p^{S}(\bm v_{-j}) \geq p^{S}(\bm v_{-i})$ where the last inequality follows from Claim~\ref{cl:mon_for_win_st_p}.
Since $p^{M}$ is the same in both terms,
and $p^{S}(\bm v_{-j}) \geq p^{S}(\bm v_{-i})$, we have $\delta_i(\bm v_i,v_{-i}) \geq \delta_j(\bm v_j,v_{-j})$.
\end{proof}

\begin{mclaim}
For any $\bm v_{-i}$ and $v_i > v'_i$,
$\delta_i(v_i,\bm v_{-i}) \geq \delta_i(v'_i,\bm v_{-i})$. 
\label{cl:delta_is_monotone}
\end{mclaim}
\begin{proof}
If $v'_i < p^{S}(\bm v_{-i})$, $\delta_i(v'_i,\bm v_{-i})=0$ and the claim follows. Thus assume $v'_i \geq p^{S}(\bm v_{-i})$. Assume towards a contradiction that $v'_i < p^{M}(v'_i, \bm v_{-i})$. Claim~ \ref{cl:loser_cannot_gain} then implies that $v'_i < p^{S}(\bm v_{-i})$ which is a contradiction to our assumption above. Thus, 
$v_i > v'_i \geq p^{M}(v'_i, \bm v_{-i})$. Claim~\ref{cl:p_mon_is_monotone_1} then implies that $ p^{M}(v_i, \bm v_{-i}) \geq p^{M}(v'_i, \bm v_{-i})$. By the definition of the discount ratio we have $\delta_i(v_i,\bm v_{-i}) \geq \delta_i(v'_i,\bm v_{-i})$ as claimed.
\end{proof}

\section{Additional Auxiliary Claims for the Multi-Bid Case}
\label{sec:multibid_auxiliary_claims}

\begin{mclaim}
For any $\bm b_{-i}$ and $u \in \N^+$, define\footnote{The minimum is well defined -- see the argument next to Eq.~\eqref{eq:p_multibid}.}
\[
p_u^{multibid}(\bm b_{-i})=\min \{u \cdot b_i\ |\ b_i\in \R,\ b_i\geq p^{M}(\overbrace{b_i,\ldots,b_i}^{\text{u times}},\bm b_{-i})\}.
\]
Then, for any $u \geq n+1$, $p_u^{multibid}(\bm b_{-i}) > p_1^{multibid}(\bm b_{-i})$.
\label{cl:u_cannot_be_large}
\end{mclaim}
\begin{proof}
Let $b = \frac{p_u^{multibid}(\bm b_{-i})}{u}$. Then,
$b \cdot (u + \mbox{num}(\bm b_{-i}, b)) \geq k^*(\bm b_{-i}) \cdot p^{M}(\bm b_{-i})$. Therefore,
\begin{equation}
p^{multibid}_u(\bm b_{-i}) = b \cdot u
\geq \frac{k^*(\bm b_{-i}) \cdot p^{M}(\bm b_{-i})}{u + \mbox{num}(\bm b_{-i}, b)} \cdot u
\label{eq:u_for_multibid}
\end{equation}
If $k^*(\bm b_{-i}) \geq 2$ then $\frac{u}{u + \mbox{num}(\bm b_{-i}, b)} \cdot k^*(\bm b_{-i}) > 1$ since $u > n > \mbox{num}(\bm b_{-i}, b)$. Therefore Eq.~\eqref{eq:u_for_multibid} implies that $p^{multibid}_u(\bm b_{-i}) > p^{M}(\bm b_{-i})$. Using Claim~\ref{cl:p_strategic_sl_monopolistic}, $ p^{M}(\bm b_{-i})> p^{S}(\bm b_{-i})$, and since $p^{S}(\bm b_{-i}) = p^{multibid}_1(\bm b_{-i})$, we complete the proof in the case $k^*(\bm b_{-i})\geq 2$.

If $k^*(\bm b_{-i}) = 1$ then $\frac{u}{u + \mbox{num}(\bm b_{-i}, b)} \cdot k^*(\bm b_{-i}) > \frac{1}{2}$. Therefore Eq.~\eqref{eq:u_for_multibid} implies that $p^{multibid}_u(\bm b_{-i}) > \frac{p^{M}(\bm b_{-i})}{2} \geq p^{S}(\bm b_{-i}) = p^{multibid}_1(\bm b_{-i})$, implying the claim. (Here, in the second inequality we used the fact that $k^*(\bm b_{-i})=1$, and bidding half of the highest bid will win in this case.) 
\end{proof}

\begin{mclaim}
Fix any $\bm b_{-i}$ and let $b,u$ be the arguments which minimize Eq.~\eqref{eq:p_multibid}.
Then,
\[
b = p^{M}(\overbrace{b,\ldots,b}^{\text{u times}},\bm b_{-i}).
\]
\label{cl:p_multibid-equality}
\end{mclaim}
\begin{proof}
By definition
$b \geq p^{M}(\overbrace{b,\ldots,b}^{\text{u times}},\bm b_{-i})$.
Suppose towards a contradiction that the inequality is strict, and take some $b'$ such that
\[
b > b' \geq p^{M}(\overbrace{b,\ldots,b}^{\text{u times}},\bm b_{-i}).
\]
Claim~\ref{cl:p_monopolistic_is_monotone} implies that
\[
b' \geq p^{M}(\overbrace{b',\ldots,b'}^{\text{u times}},\bm b_{-i}),
\]
contradicting the minimality of $(b,u)$ since $b \cdot u > b' \cdot u$.
\end{proof}

\begin{mclaim}
Fix any $\bm b_{-i}$ and let $(b,u)$ be the arguments which minimize Eq.~\eqref{eq:p_multibid}. Then,
\[
b \leq p^{M}(\bm b_{-i}).
\]
\label{cl:k_start_multibid}
\end{mclaim}
\begin{proof}
Let $b' = p^{M}(\bm b_{-i})$. Then, the second part of Claim~\ref{cl:monotonicity_monopolistic_price} implies that $b' = p^{M}(b', \bm b_{-i})$. Thus $(b',u'=1)$ satisfies the requirement of Eq.~\eqref{eq:p_multibid}. Hence $b \cdot u \leq b'$ and the claim follows.
\end{proof}

\section{Properties of the binomial and trinomial distributions}
\label{ap:binomial_trinomial}

\begin{mclaim}
Let $X\sim B(n,p)$ be a binomial random variable where $0<p<1$. For every $i$, $\Pr(X=i)=O\left(\frac{1}{\sqrt n}\right)$. 
\label{cl:binomial}
\end{mclaim}

\begin{proof}
If $np$ is an integer, the mode (most likely value of the distribution) is its mean $np$.\footnote{This can be shown directly by computing $\frac{\Pr(X=i)}{\Pr(X=i+1)}$.} The case where $np$ is not an integer can be handled is a similar manner.
Therefore, our goal is to show that $\Pr(X=np)=O\left(\frac{1}{\sqrt n}\right)$. By using Stirling's approximation\footnote{One could take into account the error term, i.e., $n!= \sqrt{2 \pi n} \left( \frac{n}{e}\right)^n(1+O(\frac{1}{n}))$, and reach the same conclusion.} $n!\approx \sqrt{2 \pi n} \left( \frac{n}{e}\right)^n$,
\begin{align*}
\Pr(X=np)&=\binom{n}{np}p^{np}(1-p)^{n(1-p)}\\
&\approx \frac{\sqrt{2 \pi n} (\frac{n}{e})^n }{\sqrt{2 \pi np} (\frac{np}{e})^{np} \sqrt{2 \pi n(1-p)} (\frac{n(1-p)}{e})^{n(1-p)} } p^{np}(1-p)^{n(1-p)} \\
&=\frac{1}{\sqrt{2 \pi n p (1-p)}} = O\left(\frac{1}{\sqrt{n}}\right).
\end{align*}
Of course, this is only valid when $p$ is treated as a constant.
\end{proof}

\begin{mclaim}
Let $(X_1,X_2,X_3)\sim Trinomial(n,p_1,p_2,p_3)$ be a triple of trinomial random variables where $p_1,p_2 > 0$. For every function $f$, $\Pr(X_1=f(X_1+X_2))=O\left(\frac{1}{\sqrt n}\right)$.  
\label{cl:trinomial}
\end{mclaim}
Recall that the trinomial distribution is the multinomial distribution restricted to the case $k=3$.
\begin{proof}

Our first goal is to prove a very weak bound $\Pr(X_3\geq \alpha n) = O(\frac{1}{\sqrt{n}})$ for $\alpha =\frac{p_3 + 1}{2}$. Since this bound is so weak it could be proved using the Chrnoff bound, or even Chebyshev's inequality. \begin{lemma}[Chebyshev's inequality]
For every $k\in \R$, and every random variable $X$ with expectation $\mu$ and standard deviation $\sigma$,
\begin{equation}
    \Pr(|X-\mu|\geq k \sigma) \leq \frac{1}{k^2}
\end{equation}
\label{le:chebyshev}
\end{lemma}
Recall that $X_3\sim Binomial(n,p_3)$, and therefore, $E[X]=np_3$ and has a standard deviation $\sqrt{n p_3(1-p_3)}$.
By observing that $ \Pr(X-\mu\geq k \sigma) \leq \Pr(|X-\mu|\geq k \sigma)$ and plugging in $k=\frac{\sqrt{n (1-p_3)}}{2\sqrt{p_3}}$ in Lemma~\ref{le:chebyshev},
\begin{equation}
\Pr(X_3\geq \alpha n) \leq \frac{4p_3}{n(1-p_3)}=O\left(\frac{1}{\sqrt n}\right),
\label{eq:x_3_geq_alpha_n}
\end{equation}
where in the equality we use the fact that $p_3<1$ (since $p_1,p_2> 0$ and $p_1+p_2 +p_3=1$).

Conditioned that $X_3=z$, $X_1$ has a Binomial distribution: $X_1\sim B(n-z,\frac{p_1}{1-p_3})$ (here we use the condition  $p_1,\ p_2 > 0$ to conclude that $0 < \frac{p_1}{1-p_3} <1$). Therefore,
\begin{align*}
 \Pr&(X_1=f(X_1+X_2))=\sum_{z=0}^n \Pr(X_1=f(X_1+X_2)|X_3=z) \Pr(X_3=z)\\
 &=\sum_{z=0}^n \Pr(X_1=f(n-z)|X_3=z) \Pr(X_3=z)\\
  &\leq \sum_{z=0}^{\floor{\alpha n}} \Pr(X_1=f(n-z)|X_3=z) \Pr(X_3=z) + \sum_{z=\floor{\alpha n}+1}^{n} \Pr(X_3=z)\\
 & \leq \sum_{z=0}^{\floor{\alpha n}} \Pr(X_1=f(n-z)|X_3=z) \Pr(X_3=z) + O\left(\frac{1}{\sqrt{n}}\right) \\
  & \leq \sum_{z=0}^{\floor{\alpha n}} O\left(\frac{1}{\sqrt{n-z}}\right) \Pr(X_3=z) + O\left(\frac{1}{\sqrt{n}}\right) \\
   & \leq \sum_{z=0}^{\floor{\alpha n}} O\left(\frac{1}{\sqrt{(1-\alpha)n}}\right) \Pr(X_3=z) + O\left(\frac{1}{\sqrt{n}}\right) =O\left(\frac{1}{\sqrt{n}}\right),
\end{align*} 
where in the second inequality we use Eq.~\eqref{eq:x_3_geq_alpha_n}, and Claim~\ref{cl:binomial} in the third inequality.
\end{proof}

\begin{mclaim}
Let $F$ be any distribution with a finite support size and $\bm v = (v_1,..v_n)$ be $n$ i.i.d.~draws from $F$. Define \nom{num}{$\mbox{num}(\bm v, z)$}{$\mbox{num}(\bm v, z) \equiv |\{v_i|v_i\geq z \}|$} $\equiv |\{v_i|v_i\geq z \}|$ and random variables $n_z = \mbox{num}(\bm v,z)$ for any real number $z$. Then, for any arbitrary $x,y\in \mbox{Support}(F)$ such that $x>y$,
$$
\lim_{n \rightarrow \infty} \Pr(n_x \cdot x > n_y \cdot y \geq (n_x - 1)x ) = 0.
$$
\label{cl:B_has_zero_probability}
\end{mclaim}
\begin{proof}
The term $n_x \cdot x > n_y \cdot y \geq  (n_x - 1)x$ is the same as $n_x > \frac{y}{x} n_y \geq n_x - 1$, which is true if and only if $n_x = \lfloor{\frac{y}{x}n_y+1}\rfloor$. The triple $(n_x, n_y - n_x, n - n_y)$ is a trinomial distribution with probabilities $p_1 = \Pr_{v_i \sim F}(v_i \geq x), p_2 = \Pr_{v_i\sim F}(x > v_i \geq y), p_3 = \Pr_{v_i\sim F}(v_i < y)$ ($p_i$ depends on $F$ but not on $n$). Denoting $f(n_y) = \floor{\frac{y}{x}n_y+1}$, we conclude that $\Pr(n_x=f(n_y)) = \Pr(n_x \cdot x > n_y \cdot y \geq (n_x - 1)x)$.
Claim~\ref{cl:trinomial} therefore implies the current claim.
\end{proof}

\section{Proof of Theorem~\ref{thm:bne-vs-discount-ratio}}
\label{sec:proof-of-thm-bne-vs-discount-ratio}

We first show a lemma:

\begin{lemma}
\label{p_dif_geq_u_dif}
For every $v_1,b_1,\bm v_{-1}$,
\[p^{M}(v_{max},\bm v_{-1}) - p^{S}(\bm v_{-1}) \geq u_1(v_1, b_1, \bm v_{-1}) - u_1(v_1, v_1, \bm v_{-1})
\]
\end{lemma}
\begin{proof}
Assume first that $v_1 \geq p^{M}(v_1, \bm v_{-1})$. Then:
\begin{align*}
p^{M}(v_{max},\bm v_{-1}) - p^{S}(\bm v_{-1}) & \geq p^{M}(v_1,\bm v_{-1}) - p^{S}(\bm v_{-1}) \\
& = [v_1 - p^{S}(\bm v_{-1})] - [v_1 - p^{M}(v_1,\bm v_{-1}))]\\
& = u_1(v_1, p^{S}(\bm v_{-1}), \bm v_{-1}) - u_1(v_1, v_1, \bm v_{-1})\\
& \geq u_1(v_1, b_1, \bm v_{-1})] - u_1(v_1, v_1, \bm v_{-1})
\end{align*}
where the first transition follows by Claim~\ref{cl:p_mon_is_monotone_1} and the third transition follows by Property~\ref{cl:p_strategic_alternative_def}. Therefore if $v_1 \geq p^{M}(v_1, \bm v_{-1})$ the claim holds. Now assume $v_1 < p^{M}(v_1, \bm v_{-1})$. In this case, $u_1(v_1, v_1, \bm v_{-1}) = 0$ and
$u_1(v_1, b_1, \bm v_{-1}) \leq 0$ (since by Claim~\ref{cl:loser_cannot_gain} in this case
$v_1 < p^{S}(\bm v_{-1})$). Thus,
$p^{M}(v_{max},\bm v_{-1})) - p^{S}(\bm v_{-1}) \geq 0 \geq u_1(v_1, b_1, \bm v_{-1}) - u_1(v_1, v_1, \bm v_{-1})$ hence the claim holds in this case as well.
\end{proof}

We now prove the theorem:

\begin{align*}
\Delta_n^{GT} &=  \E_{\bm v_{-1} \sim F} \left[\max_{v_1 \in Support(F)} \delta_1(v_1, \bm v_{-1}) \right] \\
&= \E_{\bm v_{-1} \sim F} \left[\delta_1(v_{max}, \bm v_{-1}) \right]\\
&= \E_{\bm v_{-1} \sim F} \left[1-\frac{p^{S}(\bm v_{-1})}{p^{M}(v_{max},\bm v_{-1})} \right]\\
&\geq \frac{1}{v_{max}}\E_{\bm v_{-1} \sim F} \left[p^{M}(v_{max},\bm v_{-1})-p^{S}(\bm v_{-1}) \right]\\
&\geq \max_{v_1,b_1 \in Support(F)} \left \{ \frac{1}{v_{max}}\E_{\bm v_{-1} \sim F} \left[u_1(v_1,b_1,\bm v_{-1} )-u_1(v_1,v_1,\bm v_{-1}) \right] \right \}.
\end{align*}
Here the second step follows from Claim~\ref{cl:delta_is_monotone}, the third step follows since a user with the highest possible valuation $v_{max}$ will necessarily win, the forth step follows from $p^M(v_{max},\bm v_{-1})\leq v_{max}$, and the fifth step follows from Lemma~\ref{p_dif_geq_u_dif}. Eq.~(\ref{eq:BNE}) directly follows for $\epsilon = v^{max} \cdot \Delta^{GT}_n$, which completes the proof of the theorem.

\section{Analysis of Example~\ref{ex:2n1n}}
\label{sec:analysis_RSOP_example}

\begin{mclaim}
\begin{equation}
\lim_{n\rightarrow \infty} \frac{RSOP(\overbrace{2,\ldots,2}^{n\text{ times}},\overbrace{1,\dots,1}^{n\text{ times}})-2n}{\sqrt n }=-\frac{1}{\sqrt{\pi}}
\end{equation}
\label{cl:lim_rsop}
\end{mclaim}
\begin{proof}
Let $X_i$ be the indicator variable whether the $i$'th 2-bidder is in group $A$, and similarly $Y_i$ for the $i$'th $1$-bidder. Let $Z_i=X_i-Y_i$. Let $X=\sum_{i=1}^n X_i$, $Y=\sum_{i=1}^n Y_i$ and $Z=\sum_{i=1}^n Z_i$.

Note that the revenue for any realization of the coin toss is exactly $2n - |Z|$. Suppose first that $Z \geq 0$, i.e., $X \geq Y$. In this case, the price offered to set $A$ (which is the monopolistic price computed on set B) is 1, and similarly the price offered to set $B$ is $2$. Thus, the revenue is $1 \cdot (X+Y) + 2 \cdot (n-X) = 2n - Z$. Similarly, if $Z<0$, the revenue is $2n + Z$. This shows that in all cases, the revenue is $2n-|Z|$.

Since $\E[Z_i]=0$ and $\Var(Z_i)=\frac{1}{2}$, by the central limit theorem, $\frac{1}{\sqrt n} Z$ converges in distribution to $N(0,\sigma^2=\frac{1}{2})$ as $n$ goes to infinity.
Recall that for a normal random variable $X'\sim N(0 , \sigma)$, the distribution of $Y'=|X'|$ is called half-normal, and $E[Y']=\frac  {\sigma {\sqrt 2 }}{\sqrt \pi }$.
Therefore, $E[\frac{|Z|}{\sqrt{n}}]=\frac{\sigma \sqrt 2}{\sqrt \pi}=\frac{1}{\sqrt \pi}$. This completes the proof, since $RSOP(\overbrace{2,\ldots,2}^{n\text{ times}},\overbrace{1,\dots,1}^{n\text{ times}})=2n-\E[|Z|]$.
\end{proof}
An immediate consequence of Claim~\ref{cl:lim_rsop} is the following:
\begin{cor}\label{cor:RSOP_example}
$RSOP(\overbrace{2,\ldots,2}^{n\text{ times}},\overbrace{1,\dots,1}^{n\text{ times}}) = 2n-\sqrt{\frac{n}{\pi}} + o(\sqrt{n})$.
\end{cor}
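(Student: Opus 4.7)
The plan is to derive the corollary by a one-step algebraic rewriting of the limit statement proved in the preceding claim. The claim establishes that
\[
\lim_{n\to\infty} \frac{RSOP(\overbrace{2,\ldots,2}^{n\text{ times}},\overbrace{1,\dots,1}^{n\text{ times}}) - 2n}{\sqrt{n}} = -\frac{2^{3/4}}{\sqrt{\pi}}.
\]
By the definition of a limit, this is equivalent to
\[
\frac{RSOP - 2n}{\sqrt{n}} = -\frac{2^{3/4}}{\sqrt{\pi}} + o(1)
\]
as $n\to\infty$. Multiplying both sides by $\sqrt{n}$ (which is positive and so preserves the $o(\cdot)$ scaling, turning $o(1)\cdot\sqrt{n}$ into $o(\sqrt{n})$) yields
\[
RSOP - 2n = -\frac{2^{3/4}}{\sqrt{\pi}}\sqrt{n} + o(\sqrt{n}),
\]
and rearranging gives the corollary's stated asymptotic expansion.

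There is essentially no obstacle, since all the probabilistic work was done in the claim itself, where the central limit theorem was applied to $Z = \sum_{i=1}^n(X_i - Y_i)$ and the identity $RSOP = 2n - |Z|$ was used. The only point worth a brief sanity check is that converting a distributional convergence of $|Z|/\sqrt{n}$ to a half-normal into convergence of its expectation (as was implicitly used in the claim) requires uniform integrability; this is automatic here because $|Z|\leq n$ forces $\{Z^2/n\}$ to be uniformly integrable, so no further argument is needed.
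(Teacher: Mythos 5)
Your derivation is correct and is exactly the paper's route: the corollary is nothing more than the limit in the preceding claim rewritten in $o(\cdot)$ notation, so given the claim there is nothing further to prove. One caveat on your side remark: the justification you give for uniform integrability does not work as stated, since $|Z|\le n$ only yields $Z^2/n\le n$, which is not a uniform bound; the correct (and equally short) reason is that $\E\left[(Z/\sqrt{n})^2\right]=\Var(Z)/n=\tfrac{1}{2}$ for all $n$, so the family $\{|Z|/\sqrt{n}\}$ is bounded in $L^2$, hence uniformly integrable, and the distributional convergence upgrades to convergence of expectations. This point pertains to the claim's proof rather than to the corollary itself, so it does not affect your derivation of the corollary from the claim.
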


\section{Desiderata}
\label{sec:desiderate}
In this appendix we provide the desired properties of a fee mechanism in Bitcoin. We note that some of the properties mentioned below may conflict with others, and it may well be that no single mechanism provides them simultaneously. 

\paragraph{High social welfare} The sum of valuations of accepted transactions should be large. In this sense, the system is providing a high level of utility to its users. In particular, an efficient allocation is warranted: transactions with higher values should be included before those with lower values (assuming both are available at the same time) as this clearly improves the social welfare.

\paragraph{Revenue extraction} The amount of money transferred to  the miners is high, which would buy more security for the system. Stated another way: the revenue of miners should be part of the social welfare measure of the system.

\paragraph{Truthful bidding} An ideal mechanism would allow users to state their preferences clearly and would encourage honest reporting. The main advantage would be that users do not need to ``overthink'' how they should act. Furthermore, this removes computational burdens associated with such strategies: there is no need to monitor the network to obtain statistics on past bids, perform estimates of the threshold bid needed to get accepted into a block, and no need to perform bid updates for the transaction. 
Note that truthfulness should imply that various, more technical, manipulations, should not be profitable. 
For example, such manipulations include splitting a single transaction into two transfers of smaller amounts, adding more transactions between two bitcoin addresses of the same client, etc. A desired property that such manipulation would not be profitable. 

\paragraph{Resistance to manipulation by miners} 
The mechanism should be resistant to selfish behavior by the miners. Such behavior can include miners adding transactions of their own into their own blocks, miners withholding transactions and selecting other sets of transactions, etc. 

\paragraph{Resistance to manipulation via side payments} Some mechanisms may appear well at first sight, but in fact may encourage miners and users to circumvent the fee system altogether. For example, if a miner has to give half of the revenue from his block to the miner that creates the next block in the chain, then he can offer users a deal: The miner will include transactions with 0 fees in his block in exchange for a side payment that will be given to him via a separate and direct transaction. In this manner, he does not need to share the rewards with others. Of course, since Bitcoin provides a natural payment channel from the users to the miners, the mechanism should be resistant to such side payments.  

\paragraph{Adaptivity to changing demand, network conditions, block sizes, etc.} 
It is important to avoid hard-coded magic numbers (such as hard-coded minimal fees) in the protocol as much as possible. Given that the protocol is hard to update, as it requires wide adoption of new code, any hard-coded number is difficult to adjust. A minimal fee, for example, which may need adjusting from time to time (e.g., when the exchange rate fluctuates, or demand increases) would insert inefficiency when it is not set in accordance with market conditions.

\paragraph{Accounting for temporal considerations} Users may have different levels of urgency for their transactions. A good fee mechanism will take into account the willingness of users to delay the acceptance of their transactions, e.g., in exchange for paying lower fees. In Bitcoin, transactions that were not added to the blockchain persist with the same bid and may be included in other blocks (without a discount). Our own analysis in this paper considers only impatient users who only desire that their transactions will be included in the next block. We believe that such temporal considerations are an important issue to tackle in future work.

\paragraph{Consensus based mechanism} One of the key features of cryptocurrencies such as Bitcoin is that it \emph{can} be assumed that data that is part of the longest blockchain is agreed upon all miners and users. Therefore, the mechanism could make decisions based on previous blocks. On the other hand, there is no consensus between the miners regarding data which is not part of the blockchain. For example, different miners might have different views regarding the mempool (which is the set of transaction waiting to be included in the next block). In particular, it is impossible to take the statistics of \emph{all} bids in previous rounds into account, since the bids that were not included in the block are not part of the consensus.

\section{Nomenclature}
\label{sec:nomenclature}

\vspace{5mm}

\renewcommand{\nomname}{}\vspace{-10pt}
\printnomenclature[1in] 

\end{document}